\newtheorem{theorem}{Theorem}
\theoremstyle{definition}
\newtheorem{lemma}{Lemma}
\def\AT{\operatorname{AT}}
\def\ET{\operatorname{ET}}
\def\CT{T}
\def\fp{\operatorname{fp}}
\def\E{\mathbb{E}}
\def\Em{\mathcal{M}}
\def\El{\mathcal{L}}
\def\temp{\mathcal{T}}
\def\eps{\alpha}
\def\sne{\subsubsection*}
\def\chapter{\relax}
\title{Fixation probability and fixation time in structured populations}
\date{}
\author[a, 1]{Josef\! Tkadlec}
\author[b, 1]{Andreas\! Pavlogiannis}
\author[a]{Krishnendu\! Chatterjee}
\author[c, $\star$]{Martin\! A.\! Nowak}
\affil[a]{IST Austria, A-3400 Klosterneuburg, Austria}
\affil[b]{Lab for Automated Reasoning and Analysis, EPFL, CH-1015 Lausanne, Switzerland}
\affil[c]{Program for Evolutionary Dynamics, Department of Organismic and Evolutionary Biology, Department of Mathematics, Harvard University, Cambridge, MA 02138, USA}
\affil[1]{J.T. and A.P. contributed equally to this work.}
\affil[$\star$]{To whom correspondence should be addressed. E-mail: martin\_nowak@harvard.edu}
\begin{document}

\maketitle

\tableofcontents

\begin{abstract}
The rate of biological evolution depends on the fixation probability and on the fixation time of new mutants.
Intensive research has focused on identifying population structures that augment the fixation probability of advantageous mutants.
But these ``amplifiers of natural selection'' typically increase fixation time. 
Here we study population structures that achieve a trade-off between high fixation probability and short fixation time.
First, we show that no amplifiers can have asymptotically lower absorption time than the well-mixed population. 
Then we design population structures that substantially augment the fixation probability with just a minor increase in fixation time.
Finally, we show that those structures enable higher effective rate of evolution than the well-mixed population provided that the rate of generating advantageous mutants is relatively low.
Our work sheds light on how population structure affects the rate of evolution.
Moreover, our structures could be useful for lab-based, medical or industrial applications of evolutionary optimization.
\end{abstract}

\section{Introduction}

The two primary forces that drive evolutionary processes are mutation and selection.
Mutation generates new variants in a population.
Selection chooses among them depending on the reproductive rates of individuals.
Evolutionary processes are intrinsically random. A new mutant that is initially present in the population at low frequency can go extinct due to random drift. 
The key quantities of evolutionary dynamics which affect the rate of evolution are~\cite{Kimura68,Ewens04,Nowak2006,Desai07,McCandlish15}: 
(a)~the mutation rate $\mu$, which is the rate at which new mutants are generated;
(b)~the fixation probability $\rho$, which is the probability that the lineage of a mutant takes over the whole population; and 
(c)~the fixation time $\tau$, which is the expected time 
until the lineage of a mutant fixates in the population.

A classical and well-studied evolutionary process is the discrete-time Moran birth-death process~\cite{Moran1962}. 
Given a population of $N$ individuals, at each time step an individual is chosen for reproduction proportionally to its fitness; then the offspring replaces a random individual (see Figure~\ref{fig:1}a).
In the case of a well-mixed population, each offspring is equally likely to replace any other individual.
For a single new mutant with relative fitness $r$, its fixation probability is $\rho=(1-1/r)/(1-1/r^{N})$.
Thus, for $r>1$ and large $N$ we have $\rho \approx 1-1/r$~\cite{lieberman2005,Nowak2006}.

For measuring time, there are two natural options.
The \textit{absorption time} is the average number of steps of the Moran process until the population becomes homogeneous, regardless of whether the mutant fixates or becomes extinct.
Alternatively, the (conditional) \textit{fixation time} is the average number of steps of those evolutionary trajectories that lead to the fixation of the mutant, ignoring trajectories that lead to the extinction of the mutant.
Since the evolutionary trajectories leading to extinction are typically shorter than those leading to fixation, the fixation time tends to be longer than the absorption time.
Therefore, in our results concerning time we present lower bounds on the absorption time and upper bounds on the fixation time.

For the well-mixed population, both the absorption time and the fixation time are of order of $N \log N$~\cite{diaz2016absorption, altrock2009fixation}.
Specifically, for $r>1$ and large $N$, the absorption time is approximately $\frac{r+1}{r}\cdot N \log N$
while the fixation time is approximately $\frac{r+1}{r-1}\cdot N \log N$.
For neutral evolution, $r=1$, the absorption time is approximately  $N\log N$ while the fixation time is $(N-1)^2$.

Both the fixation probability and the fixation time depend on population structure~\cite{Slatkin81,Nowak1992,durrett1994stochastic,Whitlock2003,Hauert04,komarova2006spatial,Houchmandzadeh2011,Frean2013,komarova2014complex}.
Evolutionary graph theory is a framework to study the effect of population structure.
In evolutionary graph theory, the structure of a population is represented by a graph~\cite{lieberman2005,Broom2008,Broom2011,shakarian12,Debarre14,Allen2017}: 
each individual occupies a vertex; the edges represent the connections to neighboring sites where a reproducing individual can place an offspring. 
The edge weights represent the proportional preference to make such a choice.
The well-mixed population is given by the complete graph $K_N$ where each individual is connected to each other individual (Figure~\ref{fig:1}b). Graphs can also represent deme structured populations, where islands are represented by complete graphs and connections (of different weights) exist between islands. Graphs can also represent spatial lattices or asymmetric structures.

A well-studied example is the star graph $S_N$, which has one central vertex and $N-1$ surrounding vertices each connected to the central vertex (Figure~\ref{fig:1}b).
For the star graph, the fixation probability tends to approximately $1-1/r^2$ for $r>1$ and large $N$
while both the absorption and the fixation time is of the order of $N^2\log N$~\cite{Chalub16,broom2011stars,askari2015analytical}. 
Hence, if a mutant has $10\ \%$ fitness advantage, which means $r=1.1$, the star graph amplifies the advantage to $21\ \%$,
but at the cost of increasing the time to fixation (Figure~\ref{fig:1}c).

%
\begin{figure}
\includegraphics[width=\linewidth]{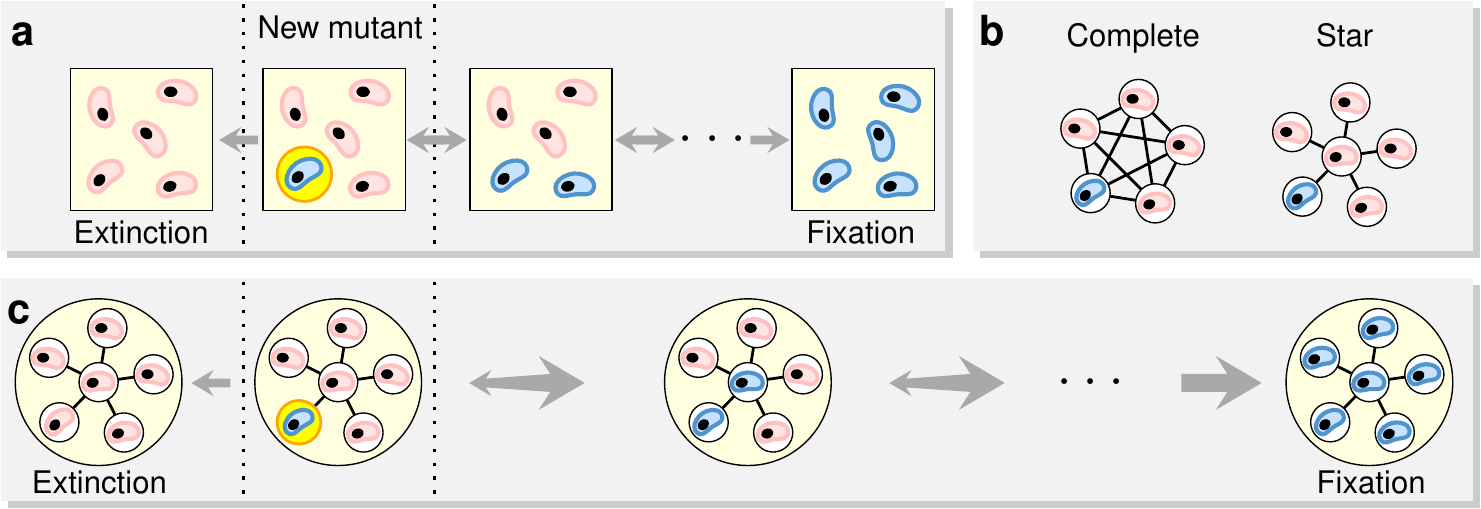}
\caption{
{\bf Moran process on graphs.}
\textbf{a,} A new mutant (blue) appears in a population of finite size. The lineage of the new mutant can either become extinct or reach fixation.  The Moran process is a birth-death process; in any one time step one new offspring is generated and one individual dies.
\textbf{b,} All fixed spatial structures can be described by graphs. The classical, well-mixed population corresponds to a complete graph, where all positions are equivalent.  The star graph is a well-studied example of extreme heterogeneity, where one individual, the center, is connected to all others, but each leaf is only connected to the center.
\textbf{c,} Population structure influences both the fixation probability and the fixation time. An advantageous mutant introduced at a random vertex of a star graph is more likely to fixate than on a complete graph (the arrows pointing to the right are thicker), but the (average) fixation time on the star graph is much longer than on the complete graph (the arrows are longer). The star graph achieves amplification at the cost of deceleration.
}
\label{fig:1}
\end{figure}

Several population structures have been identified that alter the fixation probability of advantageous mutants.
Structures that decrease the fixation probability are known as {\em suppressors of selection} and those that increase it are known as {\em amplifiers of selection}~\cite{lieberman2005,ACN15,HT15,beatingstar}.
However, amplification is usually achieved at the cost of increasing fixation time~\cite{Whitlock2003,Frean2013,diaz2014approximating,Hindersin2014}. 
For example, the star graph has higher fixation probability but also longer fixation time as compared to the well-mixed population.
There also exist {\em superamplifiers} (also known as arbitrarily strong amplifiers of natural selection) 
that guarantee fixation of advantageous mutants in the limit of large population size~\cite{Galanis17,Giakkoupis16,Goldberg16,pavlogiannis2018construction}.
But those structures tend to require even longer fixation times.

We can refer to population structures that decrease the fixation time with respect to the well-mixed population as {\em accelerators}.
Both the fixation probability and the fixation time play an important role in the speed of evolution.
Ideally, we prefer a population structure that is both an amplifier and an accelerator,
but all known amplifiers achieve amplification at the cost of deceleration.
In fact, this slowdown can be so prominent that it outweighs the amplification and leads to longer evolutionary timescales~\cite{Frean2013}.

Here we show that absorption time on any amplifier is asymptotically at least as large as both the absorption and the fixation time on the well-mixed population.
Given this negative result, we proceed to study the trade-off between fixation probability and time more closely.
We have computed fixation probabilities and fixation times for a large class of graphs.
While within this class, the well-mixed population is optimal with respect to fixation time, and the star graph is favorable with respect to fixation probability,
there is a very interesting trade-off curve between fixation probability and fixation time.
In other words, there exist population structures which provide different trade-offs between high fixation probability and short fixation time.
As our main analytical results, we present population structures that asymptotically achieve
fixation probability equal to that of star graphs and fixation time similar to that of well-mixed populations.
Thus, we achieve amplification with negligible deceleration.

Finally, while the above analytical results are established for large population sizes, we also study evolutionary processes on population structures of small or intermediate size by numerical simulation.
Specifically, we consider the effective rate of evolution as proposed by Frean, Rainey, and Traulsen~\cite{Frean2013}.
Generally speaking, the well-mixed population has a high effective rate of evolution if the mutation rate is high, while the star graph has a high effective rate of evolution if the mutation rate is very low.
We show that for a wide range of intermediate mutation rates, our new structures achieve higher effective rate of evolution than both the well-mixed population and the star graph.

\section{Results}

We study several fundamental questions related to the probability-time trade-off 
of a single advantageous mutant in a population of size $N$.
Mutants can arise either spontaneously or during reproduction.
Mutants that arise spontaneously appear at a vertex chosen uniformly at random among all $N$ vertices. This is called {\em uniform initialization}. Mutants that arise during reproduction appear at each vertex
proportionally to its replacement rate, which is called \textit{temperature} of that vertex. This is called {\em temperature initialization}.
We study the probability-time trade-off for both types of initialization.

\subsection{Amplifiers and accelerators}
First, we investigate whether there are population structures that are amplifiers and asymptotic accelerators of selection as compared to the complete graph (well-mixed population).
We show that for any amplifier with population size $N$, the absorption time is of the order of at least
$ N\log N$, for both types of initialization.
Since the fixation time tends to be even longer than the absorption time and the fixation time for the complete graph is also of the order of $N\log N$, regardless of initialization,
this suggests that no amplifier is an asymptotic accelerator.
Moreover, we show that the same conclusion holds for graphs that decrease the fixation probability by not more than a constant factor.
Our result doesn't completely exclude the possibility of population structures with absorption time asymptotically shorter than that of the complete graphs but it shows that, if such structures do exist, then the fixation probability has to tend to 0 as the population size $N$ grows large.
While the above results holds in the limit of large population size, we present a small directed graph that is a 
suppressor and has a slightly shorter fixation time than the complete graph for the same population size (see  Appendix 1).

\subsection{Uniform initialization: $\eps$-Balanced bipartite graphs}
Second, we consider uniform initialization. 
There are two interesting questions:
(1) For fixed (small) population size, how do different population structures fare
with respect to the probability-time trade-off?
(2) In the limit of large population size, do there exists population structures that
achieve the same amplification as the star graph, with shorter fixation time?
Our results are as follows:
\begin{itemize}
\item[(1)] For small population size, both fixation probability and fixation time can be computed numerically~\cite{hindersin2016exact}. 
 We do this for all graphs with $N=8$ vertices and a mutant with relative fitness advantage $r=1.1$ (see Figure~\ref{fig:2a}).
We observe that the complete graph has the shortest fixation time, and the star graph has the highest fixation probability.
However, the star graph has much longer fixation time than the complete graph.
While some graphs have smaller fixation probability and longer fixation time than the complete graph,
there are other graphs which provide a trade-off between high fixation probability and short fixation time.
In particular, there are Pareto-optimal graphs. Recall that in two- or multi-dimensional optimization problems, the Pareto front is the set of non-dominated objects. In our case, the Pareto front consists of graphs for which the fixation probability can not be improved without increasing the fixation time.
For $N=8$ and $r=1.1$ the complete graph and the star graph are the two extreme points of the Pareto front. This finding holds for other values of $r>1$ and $N$ as well (see Figures~\ref{figsupp:sf1} and~\ref{figsupp:sf2}).

%
\begin{figure}
\includegraphics[width=\linewidth]{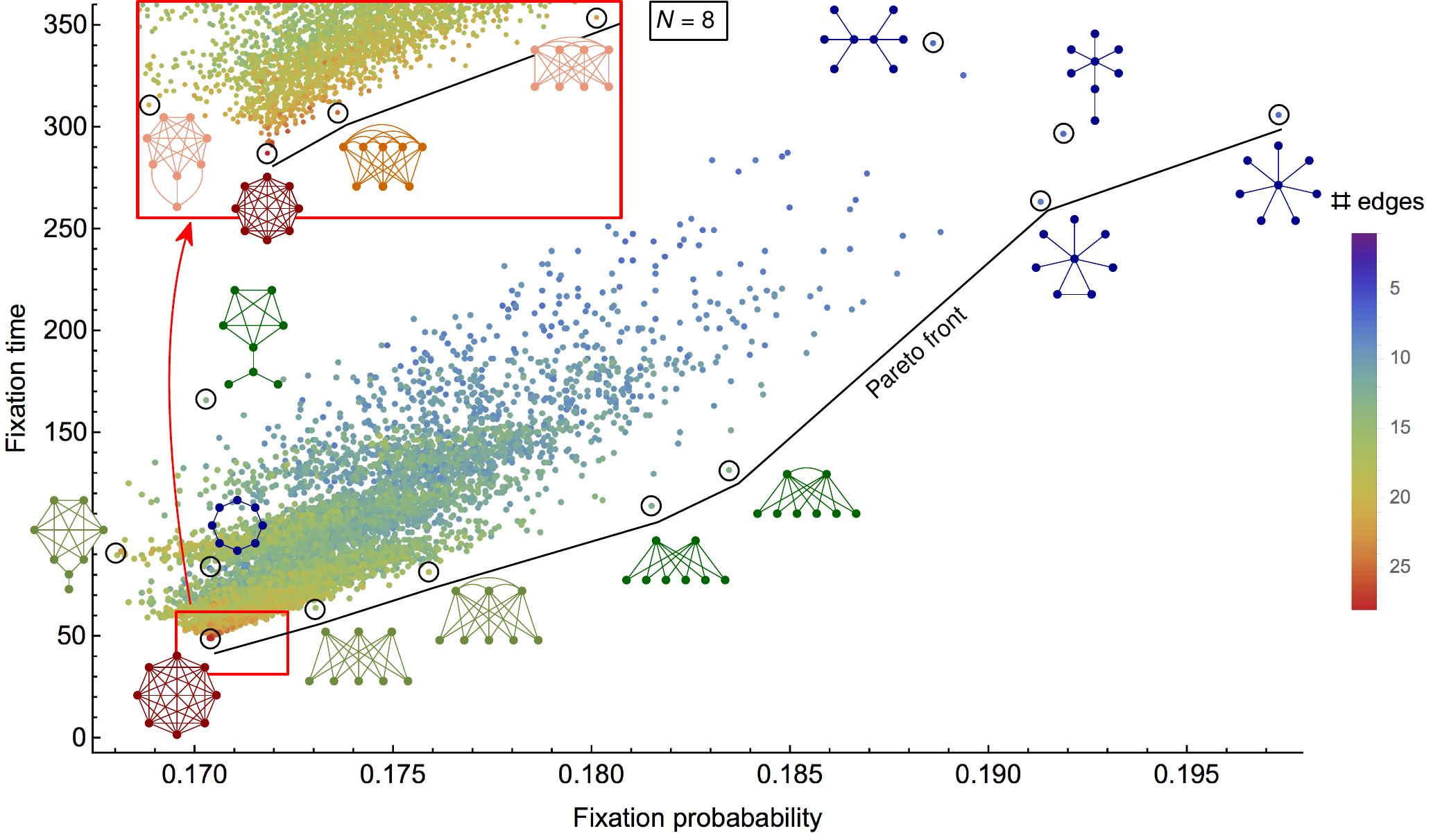}
\caption{
{\bf Fixation probability and time under uniform initialization.}
Numerical solutions for all 11,117 undirected connected graphs of size $N=8$. Each graph is represented by a dot and color corresponds to the number of its edges.
The $x$- and $y$-coordinates show the fixation probability and the fixation time for a single mutant with relative fitness $r=1.1$, under uniform initialization.
The graphs to the right of the complete graph are amplifiers of selection: they increase the fixation probability.
Any graph below the complete graph would be an accelerator of selection: it would decrease the fixation time.
Graphs close to the bottom right corner provide good trade-off between high fixation probability and short fixation time. All the values are computed by numerically solving large systems of linear equations (see e.g.~\cite{hindersin2016exact}). See Figures~\ref{figsupp:sf1} and~\ref{figsupp:sf2} for other $r$ values and $N=9$.
}
\label{fig:2a}

\end{figure}

\item[(2)]
We answer the second question in the affirmative.
The trade-off results (Figure~\ref{fig:2a}) that we study allow us to obtain graphs
which we call \textit{$\eps$-Balanced bipartite graphs}.
Intuitively, they are defined as follows:
We split the vertices into two groups such that one is much smaller than the other, but both are relatively large. Then we connect every two vertices that belong to different groups. We show that, in the limit of large population size, this bipartite graph achieves the same fixation probability as the star graph and that its fixation time asymptotically approaches that of the complete graph.
Formally, an $\eps$-Balanced bipartite graph $B_{N,\eps}$ is a complete bipartite graph with the parts containing $N$ and $N^{1-\eps}$ vertices (see Figure~\ref{fig:2b}a for illustration with $N=8$ and $\eps=1/3$).
We show that the fixation probability of such graphs tends to $1-1/r^2$ while the fixation time is of the order of $N^{1+\eps}\log N$, for any $\eps>0$ 
(compared to $N \log N$ fixation time of complete graph).
Thus we achieve the best of two worlds, that is, we present a graph family that, in the limit of large 
population size, is as good an amplifier as the star graph and almost as good with respect to time as 
the complete graph.
As a byproduct, we prove that on a star graph, both the absorption and the fixation time are of the order of $N^2\log N$ for any fixed $r>1$ which is in alignment with known bounds and approximation results~\cite{broom2011stars,askari2015analytical}.
\end{itemize}
Moreover, we support the analytical result with computer simulations for fixed population size $N=100$. We compute the fixation probability and time for selected families of graphs, such as Trees or random Erd\H os--R\' enyi graphs (see Figure~\ref{fig:2b}b). The $\eps$-Balanced bipartite graphs outperform all of them. Hence the analytical results are interesting not only in the limit of large population but already for relatively small population sizes.

%
\begin{figure}
\includegraphics[width=\linewidth]{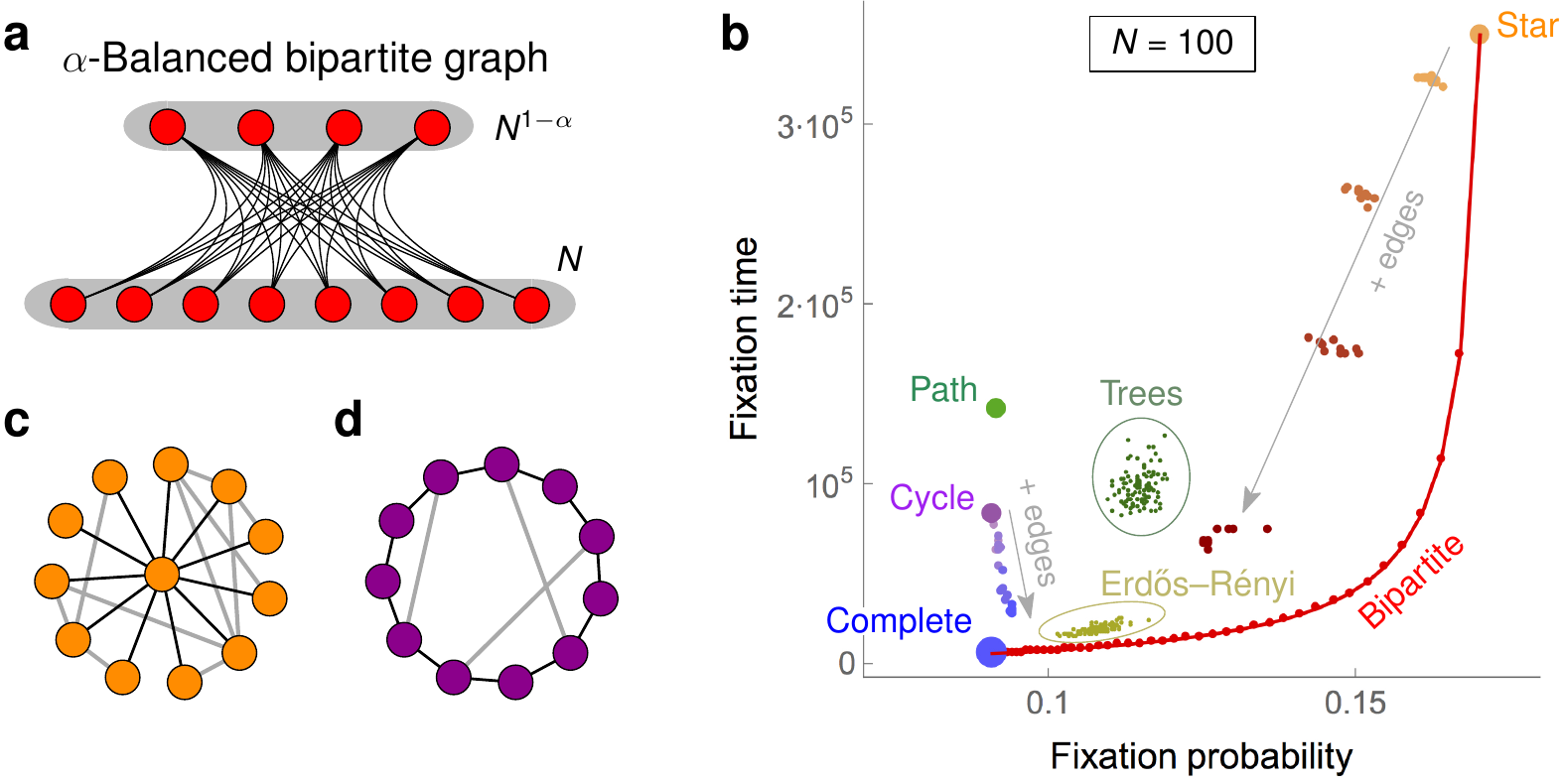}
\caption{
{\bf $\eps$-Balanced bipartite graphs.}
\textbf{a,} An $\eps$-Balanced bipartite graph $B_{N,\eps}$ is a complete bipartite graph with $N$ vertices in the larger part and $N^{1-\eps}$ vertices in the smaller part. Here $N=8$ and $\eps=1/3$. We prove that for large $N$, the $\eps$-Balanced bipartite graphs achieve the fixation probability of a star and, for $\eps$ small, approach the fixation time of the complete graph (see Figures~\ref{figsupp:sf3} and ~\ref{figsupp:sf4}).
\textbf{b,} In general, bipartite graphs provide great trade-offs between high fixation probability and short fixation time.
Comparison is with selected graphs of size $N=100$ such as Trees ($100\times$), random Erd\H os--Renyi graphs ($100\times$, $p=0.03$), star graphs with additional 10, 30, 50, 100 random edges ($10\times$ each), and cycle graphs with additional 1, 3, 5, 10 random edges ($5\times$ each). The values were obtained by simulating the Moran process $10^5$ times.
\textbf{c,} Star graph with several random edges.
\textbf{d,} Cycle graph with several random edges.
}
\label{fig:2b}\end{figure}

\subsection{Temperature initialization: $\eps$-Weighted bipartite graphs}
Third, we consider temperature initialization. 
The above questions for uniform initialization are also the relevant questions for temperature initialization.
Our results are as follows:
\begin{itemize}
\item[(1)] {\em Simulation results.} 
Again we numerically compute the fixation probability and the fixation time for all graphs with $N=8$ vertices (see Figure~\ref{fig:3a}).
In contrast to the results for uniform initialization (Figure~\ref{fig:2a}), under temperature initialization, the complete 
graph has both the highest fixation probability and the shortest fixation time.
This finding holds for other values of $r>1$ and $N$ as well (see Figures~\ref{figsupp:sf5} and~\ref{figsupp:sf6}).

%
\begin{figure}
\includegraphics[width=\linewidth]{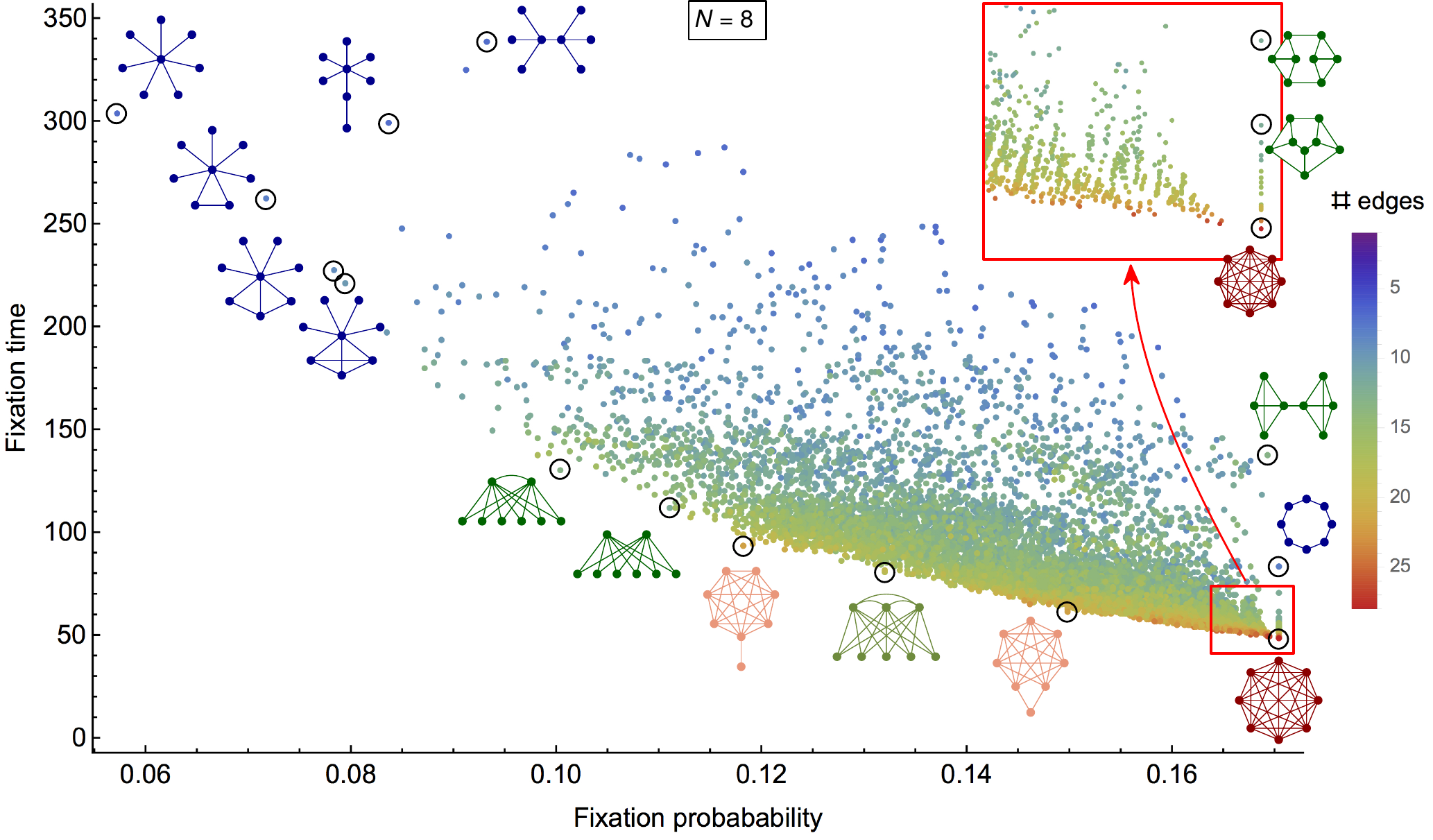}
\caption{
{\bf Fixation probability and time under temperature initialization.}
Numerical solutions for all undirected connected graphs of size $N=8$, under temperature initialization ($r=1.1$). There are no amplifiers and no (strict) accelerators. By the isothermal theorem~\cite{lieberman2005}, all the regular graphs achieve the same fixation probability as the complete graph.
See Figures~\ref{figsupp:sf5} and~\ref{figsupp:sf6} for other $r$ values and $N=9$.
}
\label{fig:3a}

\end{figure}

\item[(2)] {\em Analytical results.} 
Figure~\ref{fig:3a} shows that there is no trade-off for temperature initialization. 
The result is not surprising as it has recently been shown that, for temperature initialization, no unweighted graphs can achieve substantial amplification~\cite{pavlogiannis2018construction},
and in the present work we have established that the complete graph is asymptotically optimal among amplifiers with respect to absorption time.
Thus, the relevant analytical question is whether weighted graphs can achieve interesting trade-offs
between fixation probability and time. 
We answer this question in the affirmative by presenting a weighted version of $\eps$-Balanced bipartite graphs (see Figure~\ref{fig:3b}a).
Intuitively, we add weighted self-loops to all vertices in the larger group of an $\eps$-Balanced bipartite graph, such that when such a vertex is selected for reproduction, its offspring replaces the parent most of the time and migrates to the smaller group only rarely.
 Formally, the $\eps$-Weighted bipartite graph $W_{N,\eps}$ is a complete bipartite graph with the parts containing $N$ and $N^{1-\eps}$ vertices. Moreover, each vertex in the larger part has an extra self-loop of weight approximately $N^{1-\eps/2}$. 
 We show that, in the limit of large population size, this weighted bipartite graph structure achieves 
 fixation probability $1-1/r^2$ (which is the same as the star graph under uniform initialization),
while the fixation time is of the order of $N^{1+\eps}\log N$, for any $\eps>0$ 
(compared to $N \log N$ fixation time of complete graph).
Thus we again achieve the best of two worlds, that is, we present a graph family that, in the limit of large 
population, is as good an amplifier as the star graph (under uniform initialization)
and almost as good with respect to time as the complete graph.
\end{itemize}
As before, Figure~\ref{fig:3b}b shows computer simulations for $N=100$, including Trees, random Erd\H os--R\' enyi graphs and the Bipartite graphs. The $\eps$-Weighted bipartite graphs are the only graphs that considerably increase the fixation probability as compared to the complete graph.

%
\begin{figure}
\includegraphics[width=\linewidth]{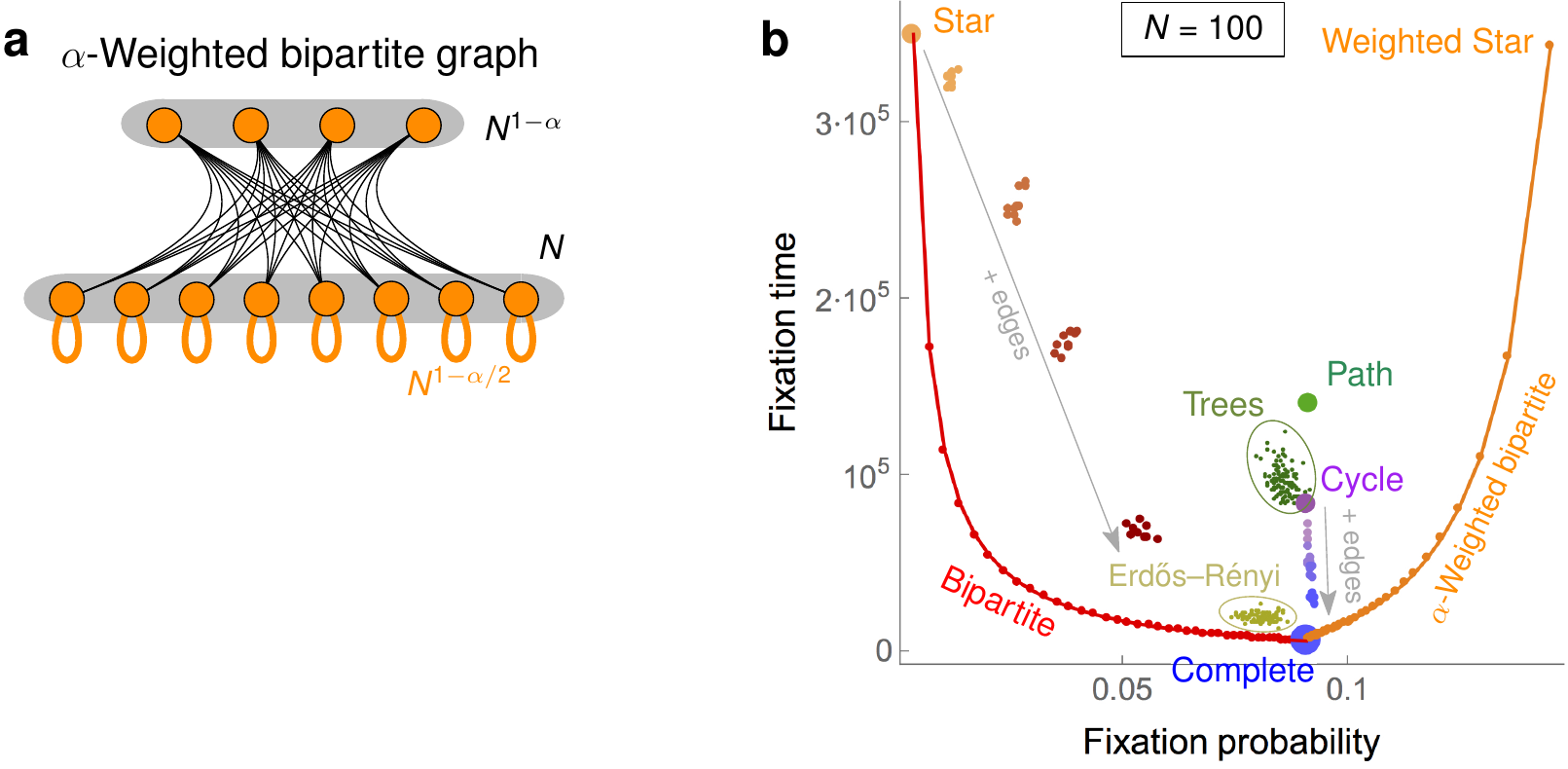}
\caption{
{\bf $\eps$-Weighted bipartite graphs.}
\textbf{a,} An $\eps$-Weighted bipartite graph $W_{N,\eps}$ is obtained by adding self-loops with weight $w\doteq N^{1-\eps/2}$ to all vertices in the larger part of an $\eps$-Balanced bipartite graph. Here $N=8$ and $\eps=1/3$. We prove that for large $N$, the $\eps$-Weighted bipartite graphs improve the fixation probability to $1-1/r^2$ and, for $\eps$ small, approach the fixation time of the complete graph.
\textbf{b,} Computer simulations for selected graphs of size $N=100$ (as in Figure~\ref{fig:2b}b). It is known than among unweighted graphs, only a very limited amplification can be achieved~\cite{pavlogiannis2018construction}. Our $\eps$-Weighted bipartite graphs (with self-loops of varying weight) overcome this limitation and provide trade-offs between high fixation probability and short fixation time.
}
\label{fig:3b}
\end{figure}

\subsection{Effective rate of evolution}
Finally, we study the effectiveness of the presented population structures for small population sizes. 
We use an elegant mathematical formula for the effective rate of evolution that combines both fixation probability 
and fixation time~\cite{Frean2013}. 
Let $t_1= \frac{1}{N \mu \rho}$ denote the expected number of generations to generate a mutant that eventually fixates,
where $N$ is the population size, $\mu$ is the mutation rate and $\rho$ is the fixation probability.
Let $t_2= \frac{\tau}{N}$ denote the expected number of generations for a mutant to fixate once it is generated.
Note that $\tau$ is the fixation time measured in steps of the Moran process, and $\frac{\tau}{N}$ represents the number of generations.
The effective rate of evolution is defined as the inverse of the sum of the above two quantities, i.e., $\frac{1}{t_1+t_2}$.
The effective rate of evolution was studied for the complete graph and for the star graph under uniform initialization~\cite{Frean2013}.
Here we further investigate the effective rate of evolution for $\eps$-Balanced bipartite graph under uniform initialization, and for $\eps$-Weighted bipartite graphs under temperature initialization, for relatively small population sizes.

Regarding uniform initialization, we numerically compute the effective rate of evolution on $\eps$-Balanced bipartite graphs for a wide range of mutation rates $\mu$ and compare it to the effective rate of evolution on complete graphs and star graphs (see Figure~\ref{fig:4}a for fixed population size and Figure~\ref{fig:4}b for varying population sizes). 
The complete graph is more effective for high mutation rates and the star graph is more effective for low mutation rates but in the intermediate regime, suitable $\eps$-Balanced bipartite graphs are more effective than both the complete graph and the star graph. This is in a perfect alignment with the Pareto front presented in Figure~\ref{fig:2a}.

Regarding temperature initialization, we study $\eps$-Weighted bipartite graphs instead of $\eps$-Balanced bipartite graphs (Figure~\ref{fig:4}c,d).
As before, the complete graph is the most effective population structure for high mutation rates. However, star graph is a suppressor under temperature initialization and performs poorly. Therefore, except for the high mutation rate regime, various $\eps$-Weighted bipartite graphs achieve higher effective rate of evolution than both the complete graph and the star graph.

%
\begin{figure}
\includegraphics[width=\linewidth]{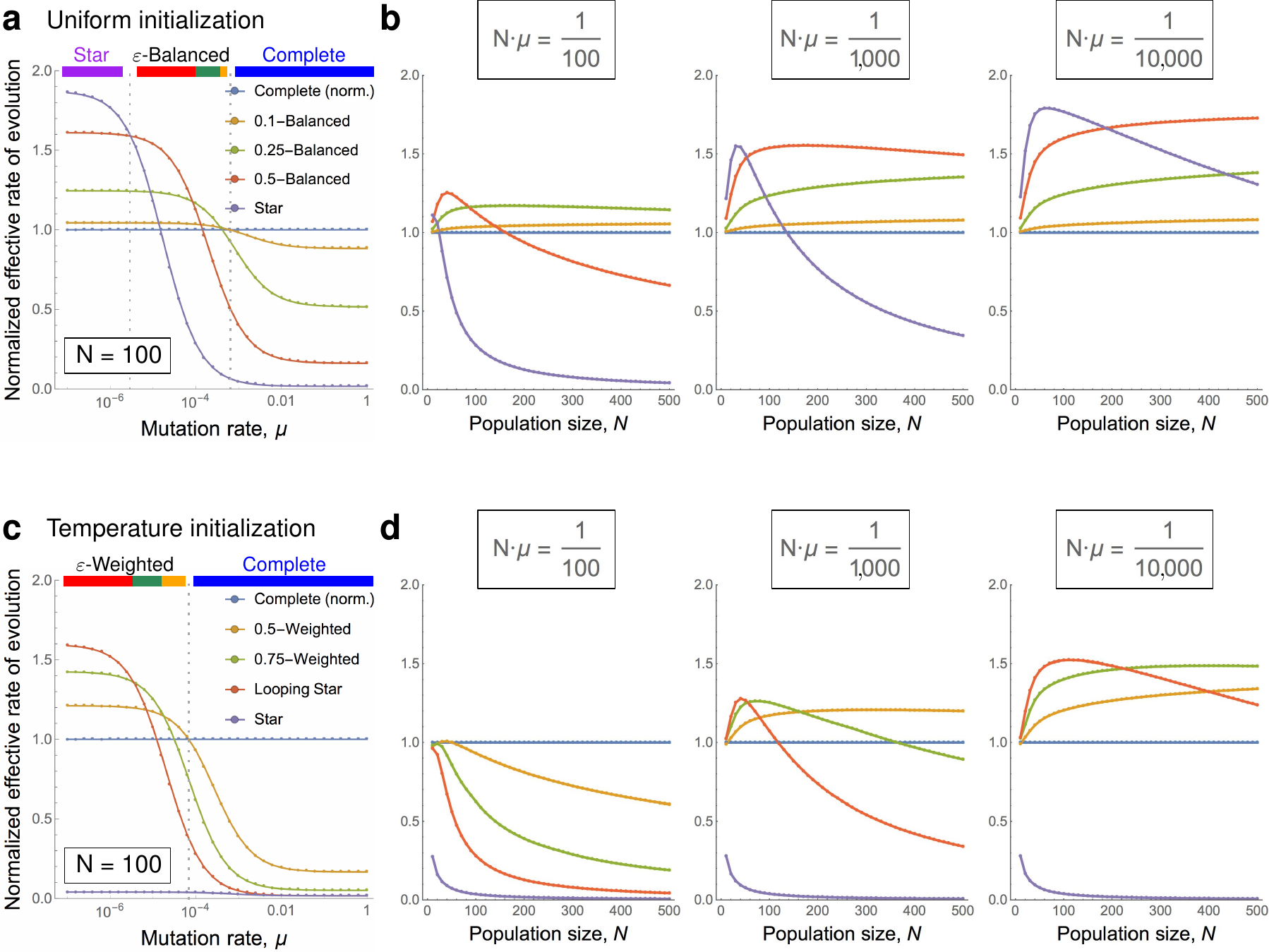}
\caption{
{\bf Fig.~4. Effective rate of evolution.}
The effective rate of evolution depends on the population size, $N$, the mutation rate, $\mu$, and the population structure.
For uniform initialization, we compare five different population structures: the complete graph (blue), $\eps$-Balanced graphs with $\eps\in\{0.1,0.25,0.5\}$ (orange, green, red), and the star graph (purple), always showing the relative rate of evolution with respect to the complete graph.
\textbf{a,} We fix $N=100$, $r=1.1$ and vary $\mu=10^{-7},\dots,10^{0}$. The complete graph has a higher effective rate of evolution if the mutation rate is high ($\mu>10^{-3}$) and star graph is favorable if the mutation rate is low ($\mu<3\cdot 10^{-6}$). In the intermediate regime, suitable $\eps$-Balanced graphs outperform both of them.
\textbf{b,} We fix $r=1.1$ and $N\cdot \mu\in\{10^{-2},10^{-3},10^{-4}\}$ and vary $N=10,20,\dots,500$.
The star graph is favorable if mutations are rare ($N\cdot\mu=10^{-4}$ and $N$ small).
Otherwise, suitable $\eps$-Balanced graphs are more efficient.
%
\textbf{c, d} Analogous data for temperature initialization. This time we compare the complete graph (blue) and the star graph (purple) with $\eps$-Weighted bipartite graphs for $\eps\in\{0.25,0.5,1\}$ (orange, green, red). 
The complete graph dominates if mutations are common ($N\cdot\mu=10^{-2}$). In other cases, $\eps$-Weighted bipartite graphs are preferred. The star is not an amplifier for temperature initialization.
}
\label{fig:4}
\end{figure}


\section{Discussion}

Many previous studies have explored how population structure affects the fixation probability of new mutants~\cite{lieberman2005,Houchmandzadeh2011,Broom2008,Broom2011,shakarian12,Allen2017,Galanis17,Giakkoupis16,Goldberg16,pavlogiannis2018construction,Diaz2013,jamieson2015fixation}. 
While such studies cover one major aspect of evolutionary dynamics, the other aspect, which is fixation time, is much less studied.
Both fixation probability and fixation time play an important role in determining the rate of evolution.
If the mutation rate is low, the rate-limiting step is waiting for an advantageous mutant to occur. In this regime the fixation probability is more important than the fixation time. Conversely, if the mutation rate is high, then fixation time is more relevant than fixation probability. In the intermediate-mutation rate regime, the trade-off between fixation probability and fixation time must be considered.
We study this trade-off and propose population structures, called $\eps$-Balanced bipartite graphs and
$\eps$-Weighted bipartite graphs,
that provide substantial amplification with negligible increase in the fixation time.
This is in stark contrast with all previous works that achieve amplification at the cost of asymptotically increasing the fixation time. As a consequence, compared to previous works, our population structures enable higher effective rate of evolution than the well-mixed population for a wide range of mutation-rate regimes.

There are some interesting mathematical questions that remain open. 
While we show that (i) amplifiers cannot have better asymptotic absorption time than the well-mixed population (in the limit of large population size, $N\to \infty$), and (ii) there are graphs of fixed population size $N$, that are suppressors and have shorter fixation time than the well-mixed population, 
two particularly interesting questions are:
(a)~Does there exist an amplifier of fixed population size that has shorter fixation time than the well-mixed population?
(b)~Does there exist a graph family (which must be suppressing) that has better asymptotic fixation time (for $N\to \infty$) than the well-mixed population?

Note that, in general, clonal interference can occur and the fixation of a mutant need not be achieved
before the next mutation arrives~\cite{Desai07,gerrish1998fate,fogle2008clonal}.
Thus, the fixation probability and fixation time alone may not completely characterize the performance 
of a population structure with respect to the overall rate and  efficiency of an evolutionary search process.
Nevertheless, the effective rate of evolution and the probability-time trade-off curves are indicative of the efficacy of each population structure in speeding-up evolution. The numerical and experimental study of population structures in the presence of clonal interference is another interesting direction for future work.

The population structures which we have described here could become an important tool for in vitro evolution~\cite{Mitchell09,Barrick09,Dai12,Lang13}, 
since they can substantially speed up the process of finding advantageous mutants.
In vitro evolution, can be used to discover optimized protein or nucleotide sequences for any medical or industrial purpose. 
Depending on the mutation-rate regime, our work shows that different population structures can lead to more effective time scales of discovery.

%
%
\section{Methods}

In this section we introduce the model in detail and formally state our results, pointing to the relevant appendices for the full derivation.  

\subsection*{Moran process on graphs}
Moran Birth-death process is a discrete-time stochastic (random) process that models evolutionary dynamics in a spatially structured population.
The population structure is represented by a connected graph $G$, possibly with weighted edges and/or self-loops.
At all times, each vertex of the graph is occupied by a single individual that is of one of two types: either a \textit{resident} or a \textit{mutant}.
The individuals of one type are considered indistinguishable.
Moreover, residents are assigned (normalized) fitness 1 while the mutants have fitness $r$. Here we consider advantageous mutants ($r>1$).
In one step of the process, an individual is selected for reproduction randomly and proportionally to its fitness. This individual produces an offspring that is a copy of itself. This offspring then selects one of the adjacent edges proportionally to the edge weight and travels along that edge to replace the individual at its other endpoint. (If the selected edge happened to be a self-loop then the offspring replaces the parent and nothing changes.)
These steps continue until the population becomes homogeneous: either all individuals are mutants (\textit{fixation} occurred) or they are all residents (\textit{extinction} occurred).
The well-mixed population is modelled by an unweighted complete graph (without self-loops).

\sne{Initialization scheme} We study the situation of a single mutant invading a population of residents. This initial mutant can appear either spontaneously or during reproduction. In the first case, called \textit{uniform initialization}, the mutant is placed at a vertex chosen uniformly at random. In the second case, called \textit{temperature initialization}, we perform one step of the Moran process in a population that consists entirely of residents and place the mutant at the vertex that the offspring migrates to.
Formally, the mutant is placed at a random vertex, proportionally to the temperature (or turnover rate) of that vertex. Here temperature $t(v)$ of a vertex $v$ is defined by
$$t(v)=\sum_{u\in N(v)} \frac{w(u,v)}{\sum_{v'\in N(u)} w(u,v')},
$$
where $w(u,v)$ is the weight of edge between $u$ and $v$ and $N(v)$ is the set of \textit{neighbors} of $v$, that is vertices connected to $v$ by an edge.

\sne{Fixation probability and time} 
Given a graph $G$ with $N$ vertices and one specific vertex $v$, we denote by $\fp(G,v,r)$ the fixation probability of a single mutant with fitness $r$ starting at vertex $v$, in a standard Moran Birth-death process. 
Then the fixation probability under uniform initialization is simply the average
$\fp(G,r)=\frac1N\sum_v \fp(G,v,r)$.
The fixation probability under temperature initialization is a weighted average
$\fp_\temp(G,r)=\sum_v t(v)\cdot \fp(G,v,r)$, where $t(v)$ is the temperature of vertex $v$.
Similarly, we define $T(G,r)$ (or $T_\temp (G,r)$) to be the fixation time, that is the expected number of steps of the Moran process until the mutants reach fixation (conditioning on them doing so). Likewise we define $\ET(G,r)$ (or $\ET_\temp (G,r)$) to be the extinction time and $\AT(G,r)$ (or $\AT_\temp (G,r)$) to be the (unconditional) absorption time.

\sne{Amplifiers and suppressors} A graph $G_N$ with $N$ vertices is called an \textit{amplifier} if it increases the fixation probability of any advantageous mutant, as compared to the Complete graph (that is, $\fp(G_N,r)>\fp(K_N,r)$ for any $r>1$).
On the other hand, a graph $G'_N$ with $N$ vertices is called a \textit{suppressor} if it decreases the fixation probability of any advantageous mutant, as compared to the Complete graph (that is, $\fp(G'_N,r)<\fp(K_N,r)$ for any $r>1$).

\subsection*{Notation for asymptotic behavior}
To talk about asymptotic behavior (in the limit of large population size $N$), we use standard mathematical notations $o(\cdot)$, $O(\cdot)$, and $\Theta(\cdot)$ that denote asymptotically strictly smaller, asymptotically less than or equal to, and asymptotically equal to (up to a constant factor), respectively. 
For example, we will write $1/N=o(1)$ (as $1/N$ is much smaller than 1, for large $N$) or $\frac12N(N+1)=\Theta(N^2)$. For detailed treatment see~\cite[Section~1.3]{cormen2009introduction}.

\subsection*{Graphs}\label{subsec:graphs}
We introduce and study the following graphs.

\sne{Complete graph} Complete graph $K_N$ on $N$ vertices models a well-mixed population. This case is well understood. In particular, the fixation probability satisfies
$$\fp(K_N,r)=\fp_\temp(K_N,r)=\frac{1-1/r}{1-1/r^N}\to 1-1/r
$$
for $r>1$ as $N\to\infty$ and 
the (unconditional) absorption time is of the order of $\Theta(N\log N)$~\cite{diaz2016absorption}.
In fact, using a standard difference method one can derive that, for $r>1$, we have $\AT(K_N,r)\approx\frac{r+1}{r}\cdot N\log N$ and $T(K_N,r)\approx\frac{r+1}{r-1}\cdot N\log N$. For reference purposes we present those proofs in Appendix 4.

\sne{Star graph} Star graph $S_N$ consists of one central vertex connected to each of the remaining $N-1$ vertices on the periphery. For large $N$, it is known that $\fp(S_N,r)\to 1-1/r^2$
and that the absorption and fixation time are of the order of at most $O(N^2\log N)$ and $O(N^3)$, respectively~\cite{broom2011stars}.
 In fact, as a corollary of our results on $\eps$-Balanced bipartite graph, we show that both the absorption time and the fixation time are of the order of $\Theta(N^2\log N)$. 
The bottom line is that, under uniform initialization, the Star graph amplifies the fixation probability but at the cost of substantially increasing the fixation time.
Under temperature initialization, the star graph is a suppressor (in fact, $\fp_\temp(S_n,r)\to 0$ as $n\to\infty$). 

\sne{$\eps$-Balanced bipartite graph}
For uniform initialization we present a family of graphs that, in the limit of large population size, achieve the fixation probability of the Star graph and the fixation time almost as good as the Complete graph. The graphs are complete bipartite graphs with both parts large but one part asymptotically larger than the other one.
Formally, given $N$ and $\eps\in(0,1]$, the $\eps$-Balanced bipartite graph $B_{N,\eps}$ is a complete bipartite graph with parts of size $N^{1-\eps}$ and $N$. That is, there are $N^{1-\eps}$ vertices in one part, $N$ vertices in the other part, and all edges that connect vertices in different parts. The case $\eps=1$ corresponds to a Star graph.

\sne{$\eps$-Weighted bipartite graphs}
For temperature initialization, the Star graph and the $\eps$-Balanced bipartite graphs fail to amplify. We present another family of weighted graphs with self-loops that, in the limit of large population size, provide fixation probability $1-1/r^2$ (the same as Star graph under uniform initialization) and the fixation time almost as good as the Complete graph.
The graphs are obtained by adding self-loops of relatively large weight to all vertices in the larger part of an $\eps$-Balanced bipartite graph.
Formally, given $N$ and $\eps\in(0,1)$, the Weighted bipartite graph $W_{N,\eps}$ is a complete bipartite graphs with one (smaller) part of size $N^{1-\eps}$, one (larger) part of size $N$, and every vertex of the larger part having a self-loop of such a weight $w$ that $N^{-\eps/2}=\frac{N^{1-\eps}}{w+N^{1-\eps}}$.
The case $\eps=1$ is closely related to a Looping Star~\cite{ACN15}.

\subsection*{Analytical results}
Here we summarize our analytical results.
They are all related to the trade-off between fixation probability and fixation time, under both uniform and temperature initialization.

First, we prove that no amplifier is asymptotically faster than the Complete graph in terms of absorption time (recall that $T(K_N,r)=\Theta(N\log N)$, see Appendix~4). 
Informally, the idea is as follows: For every $k=1,\dots,N-1$, we denote by $t_k^X$ the expected time it takes to gain a single mutant from any configuration $X$ consisting of $k$ mutants. To gain a mutant, one of the $k$ mutants has to be selected for reproduction (and then the mutant has to replace a resident). We show that this yields a lower bound for $t_k$ that is proportional to $N/k$.
Summing over all $k$'s we get that the total absorption time is of the order of at least
$N/1+N/2+\dots+N/(N-1)\approx N \log N$.
Since the absorption time for the complete graph is also proportional to $N\log N$, no amplifier is significantly faster than the complete graph.

\begin{theorem}\label{thm:noaccelerators}
Fix $r>1$. Let $G$ be any graph with $N\ge 2$ vertices and let $p=\fp(G,r)$ be the fixation probability of a single mutant under uniform initialization. Then
$$\AT(G,r)\ge \frac pr\cdot N\cdot H_{N-1},$$
where $H_{N-1}=\frac11+\frac12+\dots+\frac1{N-1}\ge \log N$.
In particular, $\AT(G,r)\ge \frac pr\cdot N\log N$ for an arbitrary graph $G$
and $\AT(A,r)\ge \frac{r-1}{r^2}\cdot N\log N $ for an arbitrary amplifier $A$.
\end{theorem}
For the formal proof, see Appendix 1.

Second, we give tight results for the fixation time on Bipartite graphs. In particular, we prove that under uniform initialization, certain $\eps$-Balanced bipartite graphs $B_{N,\eps}$ asymptotically achieve the fixation probability of the Star graph and the fixation time almost as good as the Complete graph.
The analysis of fixation probability is relatively straightforward. 
For fixation time, we provide tight lower and upper bounds. We first present the lower bound that is proportional to $N^{1+\eps}\log N$. For the upper bound we then distinguish two cases: If the size of the smaller part is small, that is $N^{1-\eps}=o(\sqrt N)$, then the argument is simpler and we get a matching upper bound. If the size of the smaller part is relatively close to $N$, the upper bound has an additional factor of $N^\eps$. As a consequence, we can prove the following theorem.

\begin{theorem}\label{biptime} Fix $\eps\in(0,1]$ and $r>1$. Let $B_{N,\eps}$ be the $\eps$-Balanced bipartite graph. Then
\begin{itemize}
\item $\fp(B_{N,\eps},r)\to 1-1/r^2$.
\begin{itemize}
\item (small center) If $\eps\in(0.5,1)$ then there exist constants $c_1$, $c_2$ such that 
$$c_1\cdot N^{1+\eps}\log N \le \AT(B_{N,\eps},r) \le c_2\cdot N^{1+\eps}\log N.$$
\item (large center) If $\eps\in(0,0.5)$ then there exist constants $c_1$, $c_2$ such that 
$$c_1\cdot N^{1+\eps}\log N \le \AT(B_{N,\eps},r) \le c_2\cdot N^{1+2\eps}\log N.$$
\end{itemize}
\end{itemize}
Moreover, the fixation time $ T(B_{N,\eps},r)$ satisfies the same inequalities.
\end{theorem}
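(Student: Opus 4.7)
Write $M = N^{1-\eps}$ and $n = N+M$. My plan is to track the process by the bivariate Markov chain on $(k,m)$, where $k$ is the number of mutants in the large part (of size $N$) and $m$ in the small part (of size $M$). Because the graph is complete bipartite, a large-part birth only changes $m$ and a small-part birth only changes $k$, and the transition probabilities depend only on $(k,m)$ and the fitness $r$. I would solve detailed balance for $m$ with $k$ held fixed to obtain the quasi-stationary value $\tilde m(k) = Mkr/(N-k+kr)$. Substituting $m = \tilde m(k)$ into the $k$-transition rates shows that the forward-to-backward ratio is exactly $r^2$, so the reduced chain on $k$ is Moran-like with effective fitness $r^2$. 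The standard Moran fixation formula then gives $(1-r^{-2})/(1-r^{-2N}) \to 1 - 1/r^2$ for a mutant starting in the large part. Under uniform initialization this happens with probability $N/n \to 1$, while landing in the small part contributes only $o(1)$, so $\fp(B_{N,\eps}, r) \to 1 - 1/r^2$.

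\textbf{Lower bound on $\AT$.} I would adapt the bottleneck argument from Theorem~\ref{thm:noaccelerators}. A $k \to k+1$ transition requires a small-part vertex to be chosen for birth (per-step probability at most $Mr/n$), the offspring to be a mutant, and to land on a resident in the large part (probability $(N-k)/N$), giving a combined per-step rate of at most $Mr(N-k)/(nN)$. Hence the expected time $t_k$ to first increase the number of mutants in the large part from $k$ to $k+1$ satisfies $t_k \geq nN/[Mr(N-k)] = \Theta(N^{1+\eps}/(N-k))$. Any fixating trajectory must traverse each intermediate value of $k$, so summing gives $\sum_{k=1}^{N-1} t_k = \Theta(N^{1+\eps} H_{N-1}) = \Theta(N^{1+\eps}\log N)$; multiplying by the $\Omega(1)$ fixation probability yields $\AT(B_{N,\eps},r) = \Omega(N^{1+\eps}\log N)$, valid in both regimes of $\eps$.

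\textbf{Upper bound on $\AT$.} For the upper bound I would analyze the reduced chain on $k$ directly. Once $m$ is close to $\tilde m(k)$, the net per-step rate of $k$-change is $s_k = \Theta(k(N-k)/N^{2+\eps})$, and a standard Moran-like computation with effective fitness $r^2$ yields total absorption time $\sum_{k=1}^{N-1} c_k / s_k = O(N^{1+\eps}\log N)$, where the weights $c_k = \Theta(1/k + 1/(N-k))$ account for expected visit numbers (and supply the $\log N$ factor coming from boundary behavior). This quasi-stationary approximation is justified as long as the small part equilibrates within the typical gap $\Theta(N^\eps)$ between $k$-transitions; the equilibration time is $\Theta(M\log M) = \Theta(N^{1-\eps}\log N)$, which fits only when $\eps > 1/2$, giving the clean bound for the small-center regime. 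For $\eps < 1/2$ the small part lags, and I would instead use a drift/coupling argument that pessimistically slows each effective $k$-transition by the lag factor $\Theta(N^\eps)$, yielding the weaker $O(N^{1+2\eps}\log N)$ bound. Controlling this second regime is the main obstacle, since the quasi-stationary reduction is no longer tight: one needs a Lyapunov or explicit-coupling hitting-time bound, with care near the boundaries $k \in \{0, N\}$ to avoid overcharging the slower transitions.

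\textbf{Fixation time.} Because $\fp \to 1 - 1/r^2$ is bounded away from $0$ and $1$, the decomposition $\AT = \fp \cdot T + (1-\fp)\ET$ combined with the easy estimate $\ET = o(\AT)$ (extinction trajectories exit the reduced chain near $k=0$ after only $O(\log N)$ effective transitions) gives $T = \Theta(\AT)$. The identical asymptotic bounds therefore transfer to the conditional fixation time $T(B_{N,\eps}, r)$, completing both parts of the theorem.
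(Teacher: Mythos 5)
Your overall outline (exact/heuristic fixation probability, bottleneck lower bound, level-by-level upper bound, transfer from $\AT$ to $T$) matches the paper's, and your lower bound is essentially the paper's argument. But the upper bound — which is the hard part of the theorem — is not actually proved in your proposal. The chain on $k$ (mutants in the large part) is not a Markov chain: its transition rates depend on $m$, and your reduction to a one-dimensional Moran chain with ``effective fitness $r^2$'' via the quasi-stationary value $\tilde m(k)$ is a heuristic, not an argument. You recognize this yourself when you say that for $\eps<1/2$ the quasi-stationary reduction ``is no longer tight'' and that one would need ``a Lyapunov or explicit-coupling hitting-time bound'' — that unstated bound is precisely the content of the theorem in the large-center regime, so the main obstacle is left unresolved. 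The paper's rigorous substitute for your effective-fitness heuristic is a potential function $\phi(i,j)=h_v^i h_c^j$ (the exact martingale for complete bipartite graphs): it partitions the state space into sections $S_i$ by the number of large-part mutants, and bounds the expected number of back-crossings $S_{i+1}\to S_i$ by a ratio of potential gaps, which is $O(1)$ exactly when $\eps>1/2$ (because $h_c^c\to 1$ there). For $\eps<1/2$ the vertical sections fail outright (the potential is no longer monotone across them, i.e.\ $g_{\min}-f_{\max}<0$), and the paper's fix is to use \emph{tilted} sections $\{(i,j): i+\lfloor j/t\rfloor = k\}$ with $h_c^t=h_v$, along which the back-crossing count degrades to $O(N^{\eps})$ and produces the extra factor in $N^{1+2\eps}\log N$. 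Nothing in your proposal supplies an equivalent mechanism, and your equilibration-time heuristic ($M\log M$ versus the inter-transition gap $N^{\eps}$), while it correctly locates the threshold $\eps=1/2$, does not by itself yield either upper bound.

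Two smaller points. First, your fixation-probability derivation by detailed balance in $m$ is also only heuristic; the clean route (and the paper's) is the exact formula $\fp(i,j)=(1-\phi(i,j))/(1-\phi(v,c))$, from which $\fp(1,0)=(1-h_v)/(1-\phi(v,c))\to 1-1/r^2$ follows in two lines. Second, for the fixation time you only need $T\le \AT/\fp=O(\AT)$ for the upper bound (this is the paper's argument), but your lower bound on $T$ rests on the unproved claim $\ET=o(\AT)$; the paper instead lower-bounds $T$ directly by running the bottleneck argument on the Doob-transformed (fixation-conditioned) chain and checking that the conditioning changes each transition probability by at most a constant factor. That step is easy to add, but as written it is another gap.
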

As an immediate corollary, we obtain that for any fixed $r>1$, both the absorption and the fixation time on a Star graph ($\alpha=1$) are of the order of $\Theta(N^2\log N)$. This is in alignment with earlier results~\cite{broom2011stars,askari2015analytical}.
For the formal proof, see Appendix 2.

Third, we prove that under temperature initialization, analogous results can be achieved using $\eps$-Weighted bipartite graphs $W_{N,\eps}$.
\begin{theorem}\label{weighted} Fix $\eps\in(0,1]$ and $r>1$. Let $W_{N,\eps}$ be the Weighted bipartite graph. Then
\begin{itemize}
\item $\fp(W_{N,\eps},r)\to 1-1/r^2$.
\item There exist constants $c_1$, $c_2$ such that 
$$c_1\cdot N^{1+\eps}\log N \le \AT(B_{N,\eps},r) \le c_2\cdot N^{1+\frac32\eps}\log N.$$
\end{itemize}
Moreover, the fixation time $ T(B_{N,\eps},r)$ satisfies the same inequalities.
\end{theorem}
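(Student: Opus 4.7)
The plan is to mirror the proof of Theorem~\ref{biptime}, with two new ingredients: the temperature initialization must be converted into an (almost) uniform-on-the-periphery start, and the weighted self-loops must be accounted for in the time bounds. Call the smaller part (of size $N^{1-\eps}$) the \emph{center} and the larger part (of size $N$) the \emph{periphery}. I would begin by computing the vertex temperatures directly from the definition: each center vertex has $N$ periphery neighbours of total outgoing weight $w+N^{1-\eps}=N^{1-\eps/2}$, so $t(c)=N^{\eps/2}$; each periphery vertex collects $N^{1-\eps}$ contributions of $1/N$ from the center plus a self-loop contribution $w/N^{1-\eps/2}=1-N^{-\eps/2}$, so $t(p)=1-\Theta(N^{-\eps/2})$. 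Summing, the periphery carries a $1-O(N^{-\eps/2})$ fraction of the total temperature, so under temperature initialization the initial mutant lies in the periphery with probability tending to $1$.

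For the fixation probability I would pass to the embedded chain obtained by ignoring self-loop events, which never alter the configuration. Conditional on a state change, the transition probabilities on $W_{N,\eps}$ differ from those on $B_{N,\eps}$ only in that the center-change rates are rescaled by $N^{1-\eps}/(w+N^{1-\eps})=N^{-\eps/2}$ relative to the periphery-change rates; this is the sole effect of the self-loops on the trajectory distribution. Following the same drift/martingale analysis used for $B_{N,\eps}$ in the proof of Theorem~\ref{biptime}, I would track the pair $(k,m)$ of mutant counts in center and periphery and show that from a single peripheral mutant the overall fixation probability tends to $1-1/r^2$. Combined with the first paragraph this gives $\fp_\temp(W_{N,\eps},r)\to 1-1/r^2$.

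For the lower bound on the absorption time I would adapt the coupon-collector argument of Theorem~\ref{thm:noaccelerators}. Each of the $N$ periphery residents must eventually be overwritten, and every such overwrite requires a reproduction originating in the center; the center carries total fitness $\Theta(N^{1-\eps})$ against a total population fitness $\Theta(N)$, so at most an $O(N^{-\eps})$ fraction of Moran steps contribute a center reproduction. Combined with the $\Omega(N\log N)$ coupon-collector bound to cover the periphery, this yields $\Omega(N^{1+\eps}\log N)$.

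The main obstacle is the upper bound $O(N^{1+\frac32\eps}\log N)$. The strategy is to factor the total Moran time as (number of embedded steps to absorption) $\times$ (average Moran time per embedded step). Adapting the stage-by-stage analysis used for $B_{N,\eps}$ to the embedded chain of $W_{N,\eps}$, I would bound the number of embedded steps by $O(N^{1+\eps}\log N)$, with the dominant contribution still coming from the periphery-filling phase. Then I would verify that in every reachable state the one-step probability of a non-self-loop event is at least $\Omega(N^{-\eps/2})$, because although the center contributes only $\Theta(N^{-\eps})$ of non-self-loop mass, the periphery contributes $\Theta(N\cdot N^{-\eps/2}/N)=\Theta(N^{-\eps/2})$ after the self-loop suppression factor. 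Hence each embedded step costs $O(N^{\eps/2})$ Moran steps in expectation, and the product gives the advertised $O(N^{1+\frac32\eps}\log N)$ absorption time. Finally, the conditional fixation time bounds follow from the absorption-time bounds by the standard observation used in Theorem~\ref{biptime} that, for $r>1$, fixating and extincting trajectories have the same order of expected length.
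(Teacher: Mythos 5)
Your overall architecture matches the paper's: the temperature computation (periphery carries a $1-O(N^{-\eps/2})$ share), the potential/martingale treatment in which the periphery's effective size is rescaled to $qv=N^{1-\eps/2}$, the coupon-collector lower bound through the center bottleneck, and a phase decomposition for the upper bound. Your factorization ``embedded steps $\times$ Moran cost per embedded step'' is just a reshuffling of the paper's ``section entries $\times$ looping steps per entry'' and yields the same $N^{\eps/2}\cdot N^{1+\eps}\log N$. Two points, however, need repair. First, your closing claim that the conditional fixation time inherits the bounds because ``fixating and extincting trajectories have the same order of expected length'' is not the argument the paper uses and is false in general: on the complete graph the extinction time is $\Theta(N)$ while the fixation time is $\Theta(N\log N)$. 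The correct transfers are: for the upper bound, $\AT\ge \fp\cdot T$ with $\fp\to 1-1/r^2$ bounded away from $0$; for the lower bound on $T$, one must rerun the coupon-collector argument on the chain conditioned on fixation (renormalizing $p''(s\to t)=p(s\to t)\,\fp(t)/\fp(s)$ and checking the renormalization factors are bounded because $\fp(1,0)$ is bounded away from $0$). Without one of these devices your lower bound controls only $\max\{T,\ET\}$, not $T$. Relatedly, the absorption-time lower bound needs the restarted-process trick, since an unconditioned trajectory need not overwrite the periphery at all.

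Second, the crux of the upper bound is hidden inside ``adapting the stage-by-stage analysis to the embedded chain.'' The embedded chain is not the chain of $B_{N,\eps}$: the ratio of center-composition changes to periphery-composition changes drops from $N^{\eps}$ to $N^{\eps/2}$, so the vertical sections used in the small-center case of the $B_{N,\eps}$ proof no longer satisfy $f_{\max}<g_{\min}$ once $\eps<2/3$ (there $h_c^c\approx\exp(-(r-1/r)N^{1-3\eps/2})\to 0$). The paper's fix is to use tilted sections of constant potential ratio $h_c$, for which the expected number of back-crossings per section is bounded by $h_c/(1-h_c)=\Theta(N^{\eps/2})$ with the weighted graph's $h_c\approx 1-(r-1/r)N^{-\eps/2}$. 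This $N^{\eps/2}$ back-crossing count, not the per-step self-loop cost alone, is what produces the exponent $1+\tfrac32\eps$; your claimed embedded-step count $O(N^{1+\eps}\log N)$ is correct only once that bound is in place. You should state and prove it explicitly rather than importing the $B_{N,\eps}$ analysis wholesale.
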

For the formal proof, see Appendix 3.

Finally, for reference purposes we compute the absorption, fixation, and extinction times of a single advantageous mutant ($r>1$) on a Complete graph, using the standard difference method.

\begin{theorem} Fix $r>1$ and let $K_N$ be the Complete graph on $N$ vertices. Then
\begin{align*}
\AT(K_N,r)&=(N-1)H_{N-1}\cdot \frac{r+1}r + (N-1)\cdot\log(1-1/r) - \frac1{r(r-1)} +o(1),\\
T(K_N,r)&=(N-1)H_{N-1}\cdot \frac{r+1}{r-1} + (N-1)\cdot\frac{r+1}{r-1}\log(1-1/r)+o(N),\\
\ET(K_N,r)&=  (N-1)\cdot \log\left(\frac r{r-1}\right) +o(N).
\end{align*}
In particular for $r=1+s$, $s>0$ small, we have  $\AT(K_N,r)\approx 2\cdot N\log N$, $T(K_N,r)\approx\frac 2s\cdot N\log N$, and $\ET(K_N,r)\approx \frac1s\cdot N $.
\end{theorem}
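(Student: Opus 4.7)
The plan is to apply the classical difference method to the birth-death chain on states $\{0,1,\ldots,N\}$ tracking the number of mutants. On $K_N$ the one-step transitions are $p_k := \Pr(k\to k+1) = \frac{rk(N-k)}{(rk+N-k)(N-1)}$ and $q_k := \Pr(k\to k-1) = p_k/r$, with all remaining mass on self-loops. The key structural feature is that $q_k/p_k = 1/r$ is \emph{constant in $k$}, which drives all the clean closed forms; the partial-fraction identity $1/p_j = (N-1)\bigl(1/(N-j) + 1/(rj)\bigr)$ is the other workhorse.

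For $\AT$, write $t_k$ for the absorption time from state $k$ and set $d_k := t_{k+1}-t_k$. First-step analysis gives $p_k d_k - q_k d_{k-1} = -1$, i.e., $d_k = d_{k-1}/r - 1/p_k$, and iterating from $d_0 = t_1$ yields $d_k = t_1 r^{-k} - \sum_{j=1}^k r^{j-k}/p_j$. The telescoping condition $\sum_{k=0}^{N-1} d_k = t_N - t_0 = 0$ then determines
$$t_1 = \frac{1}{1-r^{-N}}\sum_{j=1}^{N-1}\frac{1 - r^{-(N-j)}}{p_j}.$$
Substituting the partial fraction splits this into two pieces: re-indexing $m = N-j$ in the first piece and using $-\log(1-1/r) = \sum_{m\ge 1} r^{-m}/m$ give $H_{N-1} + \log(1-1/r) + o(1)$; the second piece contributes $H_{N-1}/r$ plus a small tail of order $1/N$ coming from $\sum_{j=1}^{N-1} r^{-(N-j)}/(rj) \sim 1/(r(r-1)(N-1))$, which produces the explicit constant $-1/(r(r-1))$ after multiplication by $(N-1)/(1-r^{-N}) = (N-1) + O(Nr^{-N})$. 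This gives the three stated terms of $\AT(K_N,r)$.

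For $T$ and $\ET$ I would apply the same scheme to suitably re-weighted recurrences. Setting $w_k := \rho_k T_k$ where $\rho_k = (1-r^{-k})/(1-r^{-N})$ is the fixation probability from state $k$ and $T_k$ the conditional fixation time, conditioning on fixation yields $(p_k+q_k)w_k - p_k w_{k+1} - q_k w_{k-1} = \rho_k$ with $w_0=w_N=0$; the identical difference method gives $w_1$ as the same double sum but with $1/p_j$ replaced by $\rho_j/p_j$. After dividing by $\rho_1$, the identity $\rho_j/\rho_1 = r(1-r^{-j})/(r-1) + O(r^{-N})$ supplies a prefactor $r/(r-1)$ and expands the product $(1-r^{-j})(1-r^{-(N-j)})$ into four pieces; summing these while reusing the $\AT$ computation gives the factor $(r+1)/(r-1)$ on both $H_{N-1}$ and $\log(1-1/r)$. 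The analogous transform $v_k := (1-\rho_k) E_k$, with $E_k$ the conditional extinction time, handles $\ET$: the weight $1-\rho_k \approx r^{-k}$ concentrates on small $k$, the sum collapses essentially to $\sum_j r^{-j}/p_j$, and its dominant part yields the leading term $(N-1)\log(r/(r-1))$.

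The main obstacle is not conceptual but bookkeeping: one must isolate the $O(1)$ correction $-1/(r(r-1))$ in $\AT$, which requires estimating the near-endpoint regime $j \approx N$ in $\sum_j r^{-(N-j)}/j \sim r^N/((r-1)(N-1))$ to $1/N$ precision, and analogously justify the $o(N)$ error bounds for $T$ and $\ET$ by bounding similar tail contributions. The calculations are routine but delicate; once they are carried out, the three formulas fall out of the framework above.
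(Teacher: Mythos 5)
Your proposal is correct and follows essentially the same route as the paper: the difference method on the mutant-count chain, the partial-fraction identity $1/p_j=(N-1)\left(\tfrac{1}{N-j}+\tfrac{1}{rj}\right)$, the telescoping formula for $t_1$, and the weighted recurrence for $\rho_k T_k$ are exactly the paper's computation (its $x_k$ is your $1/p_j$). The only immaterial divergence is the extinction time, which the paper reads off from the identity $\AT_1=\rho_1 T_1+(1-\rho_1)\ET_1$ instead of running the difference method a third time with the weight $1-\rho_k$; both give the stated $(N-1)\log\left(\tfrac{r}{r-1}\right)+o(N)$.
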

For the formal proof, see Appendix 4.
\setcounter{theorem}{0}

\section*{Acknowledgments}
J.T. and K.C. acknowledge support from ERC Start grant no. (279307: Graph Games), Austrian Science Fund (FWF) grant no. P23499-N23 and S11407-N23 (RiSE). 
A.P. acknowledges support from FWF Grant No. J-4220.
M.A.N. acknowledges support from Office of Naval Research grant N00014-16-1-2914 and from the John Templeton Foundation. 
The Program for Evolutionary Dynamics is supported in part by a gift from B. Wu and E. Larson.

\section*{Competing interests} The authors declare that no competing interests exist.

\bibliography{refs}


\pagebreak
\section{Additional Figures}

%
\begin{figure}[h]
\center
\includegraphics[width=0.8\linewidth]{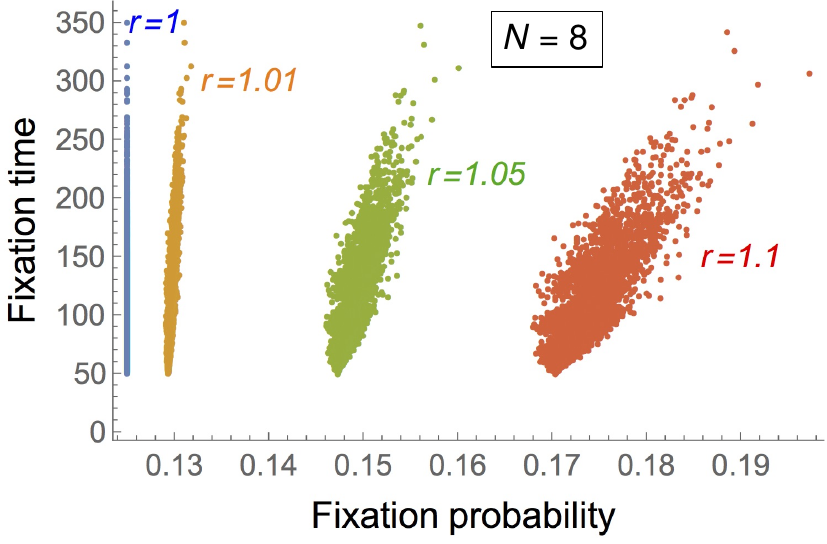}
\caption{
\textbf{Fixation probability and time under uniform initialization, other $r$ values.}
Similar data as in Figure~\ref{fig:2a} for varying $r\in\{1,1.01,1.05,1.1\}$. Under uniform initialization, the fixation probability of a neutral mutant equals $1/N$, independent of the graph structure. As $r$ approaches 1, the point cloud gets closer to a vertical line.
}
\label{figsupp:sf1}
\end{figure}

%
\begin{figure}[h]
\center
\includegraphics[width=\linewidth]{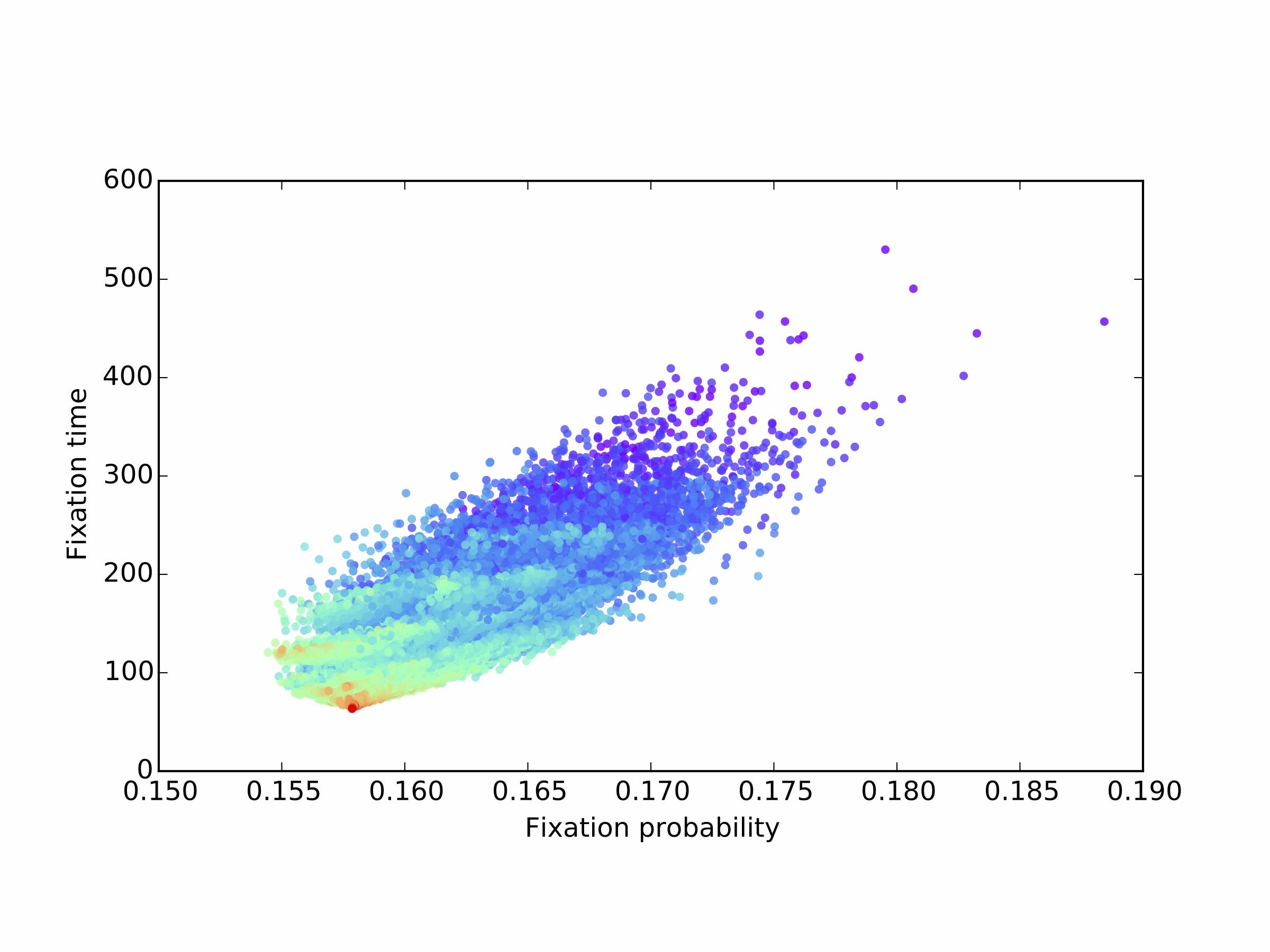}
\caption{
\textbf{Fixation probability and time under uniform initialization, $N=9$.}
Similar data as in Figure~\ref{fig:2a} for all 261,080 graphs of size $N=9$ (here $r=1.1$). The results are qualitatively the same as for $N=8$.
}
\label{figsupp:sf2}
\end{figure}

\clearpage
%
\begin{figure}[h]
\center
\includegraphics[width=0.8\linewidth]{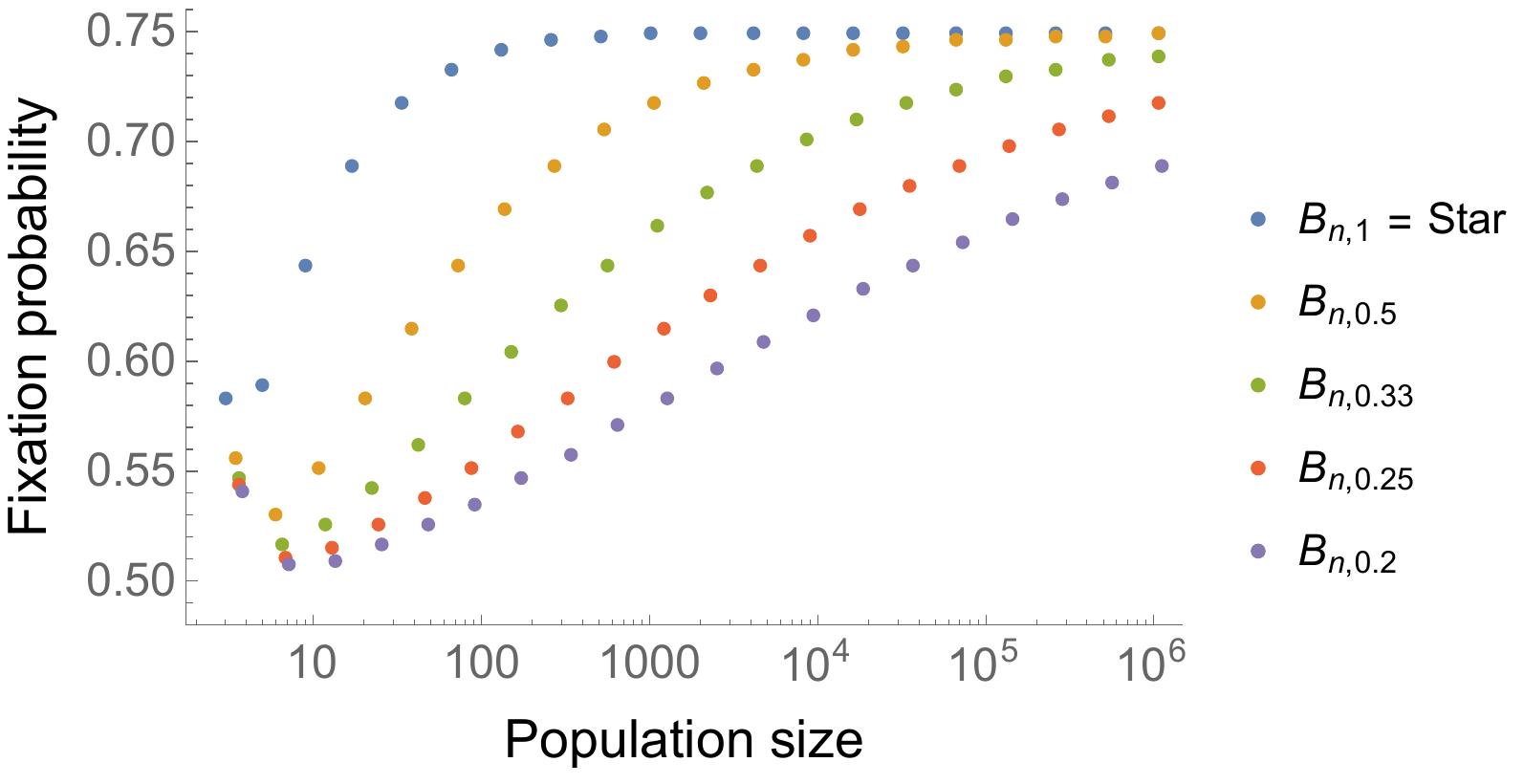}
\caption{
\textbf{Fixation probability for $B_{n,\eps}$ tends to $1-1/r^2$.}
We fix $r=2$ and consider the $\eps$-Balanced bipartite graphs $B_{n,\eps}$ for $\eps=1$ (i.e. a star) and $\eps\in\{0.5,0.33,0.25,0.2\}$. The dots are exact values of the fixation probability under uniform initialization. The figure illustrates that the fixation probability tends to $1-1/r^2$ for any $\eps>0$.
}
\label{figsupp:sf3}
\end{figure}

\pagebreak
%
\begin{figure}[h]
\center
\includegraphics[width=\linewidth]{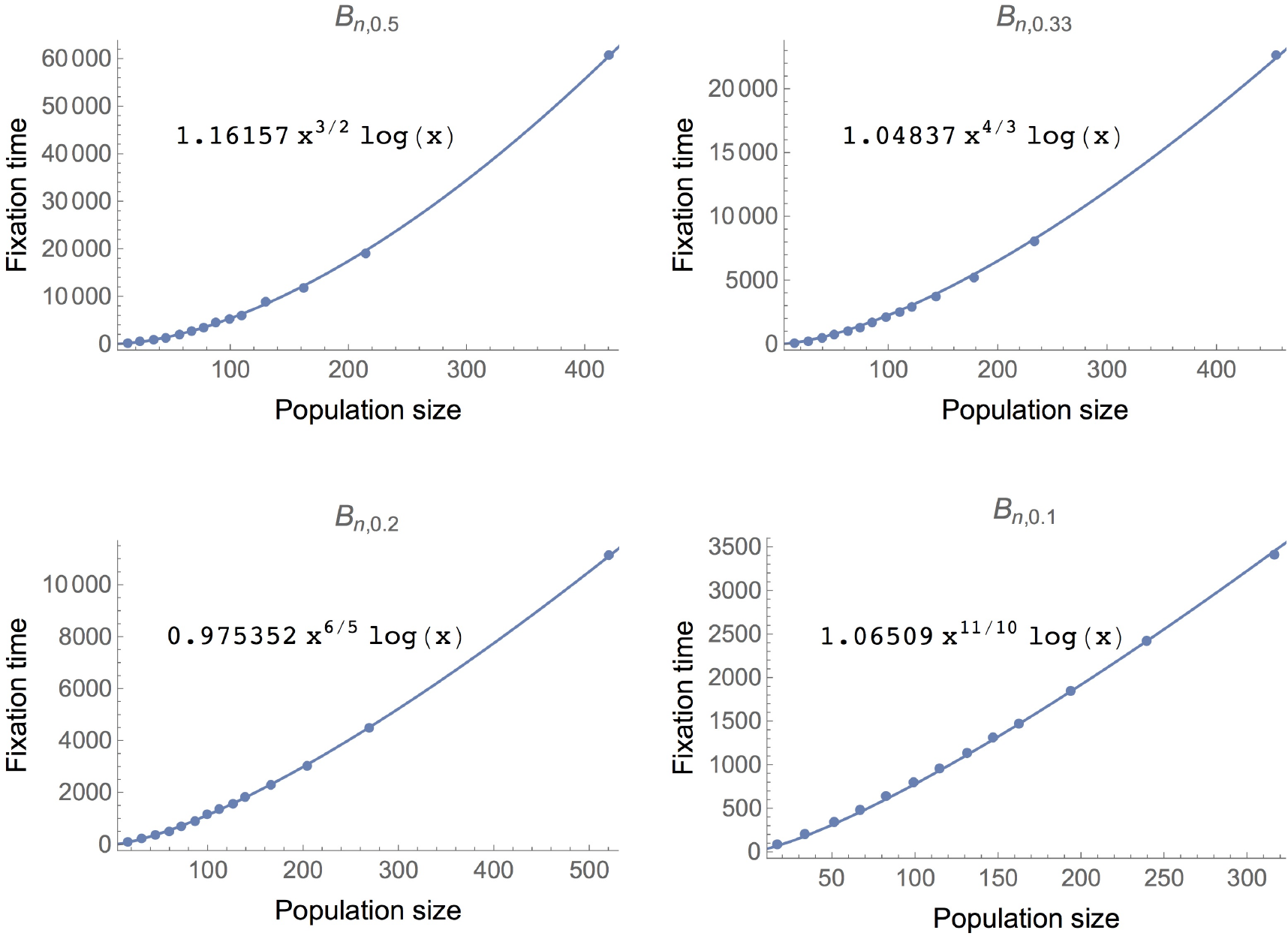}
\caption{
\textbf{Fixation time for $B_{n,\eps}$ is proportional to $n^{1+\eps}\log n$.}
We fix $r=2$ and consider the $\eps$-Balanced bipartite graphs $B_{n,\eps}$ for $\eps\in\{0.5,0.33,0.2,0.1\}$ and for $n$ up to 500. The dots are exact numerical solutions, the lines are the best fits. The figure confirms that the fixation time $T(B_{n,\eps},r)$ is proportional to $n^{1+\eps}\log n$ for any $\eps>0$.
}
\label{figsupp:sf4}
\end{figure}

\pagebreak
%
\begin{figure}[h]
\center
\includegraphics[width=0.8\linewidth]{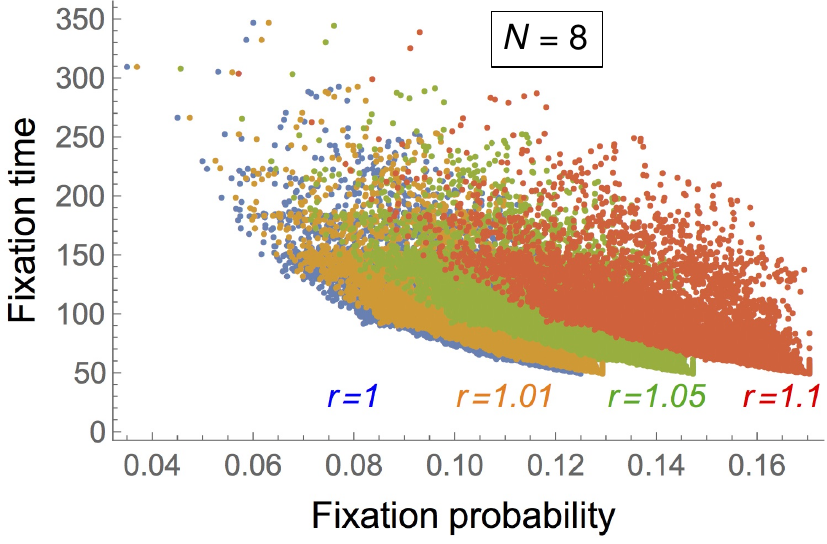}
\caption{
\textbf{Fixation probability and time under temperature initialization, other $r$ values.}
Similar data as in Figure~\ref{fig:3a} for varying $r\in\{1,1.01,1.05,1.1\}$. The results are qualitatively the same as for $r=1.1$.
}
\label{figsupp:sf5}
\end{figure}

\pagebreak
%
\begin{figure}[h]
\center
\includegraphics[width=\linewidth]{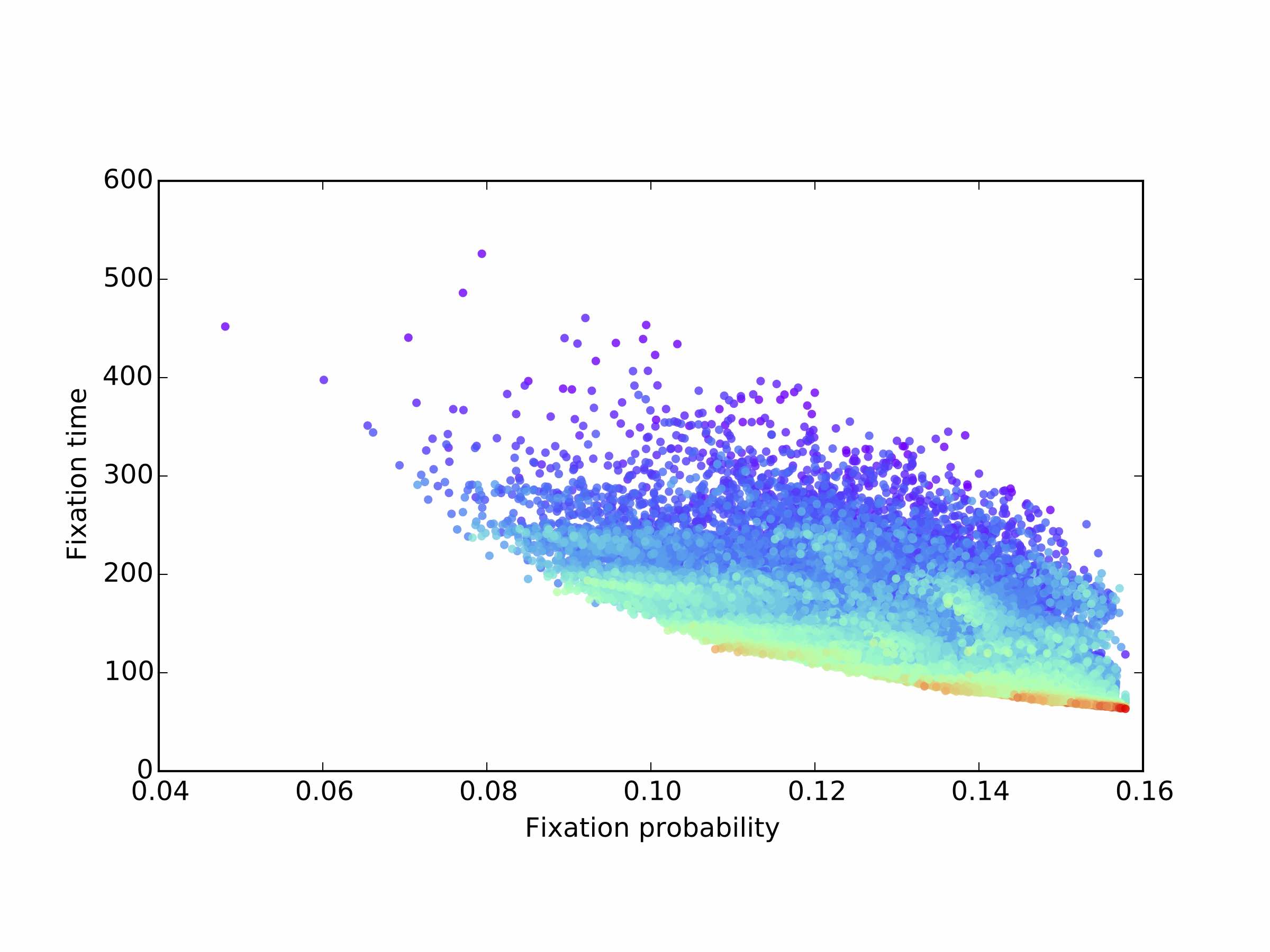}
\caption{
\textbf{Fixation probability and time under temperature initialization, $N=9$.}
Similar data as in Figure~\ref{fig:3a} for all 261,080 graphs of size $N=9$ (here $r=1.1$). The results are qualitatively the same as for $N=8$.
}
\label{figsupp:sf6}
\end{figure}

\pagebreak
%
\begin{figure}[h]
\center
\includegraphics[width=\linewidth]{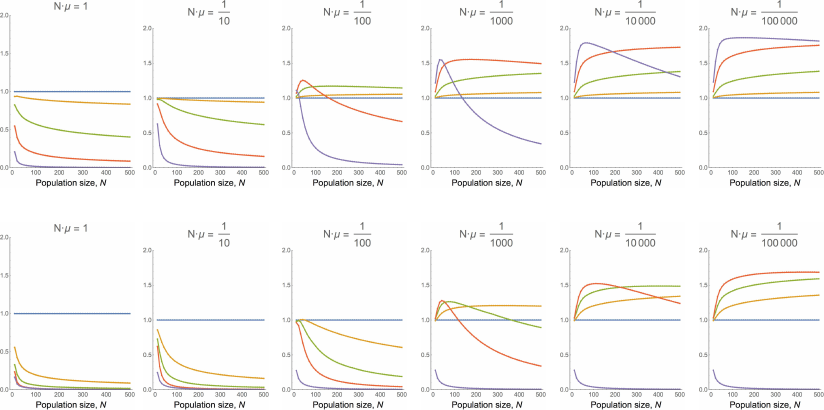}
\caption{
\textbf{Effective rate of evolution, other regimes.}
As in the Figure~\ref{fig:4} from the main text, we fix $r=1.1$ and vary population size $N=10,20,\dots, 500$.
In the first row, we consider uniform initialization and the following graphs:
Complete graph (blue), $\eps$-Balanced graph for $\eps\in\{0.1,0.25,0.5\}$ (orange, green, red), and the Star graph (purple) which is the same as a $1$-Balanced graph.
In the second row, we consider temperature initialization and the following graphs:
Complete graph (blue), Star (purple), and $\eps$-Weighted bipartite graphs for $\eps\in\{0.25,0.5\}$ (orange, green) and Weighted Star ($\eps=1$, red).
}
\label{figsupp:sf7}
\end{figure}


\clearpage
\section{Appendices}
\setcounter{section}{0}
%
\section*{Appendix 1: Lower bound on absorption time}\label{app:1}
Here we show that for $r>1$ no family of graphs with fixation probability bounded away from zero can have asymptotically smaller absorption time than the Complete graphs. Specifically, no amplifiers can absorb asymptotically faster than the Complete graphs.
Recall that for the Complete graph on $N$ vertices, both the fixation time and the absorption time is of the order of $\Theta(N\log N)$ (see Appendix~4).

\begin{theorem}\label{thm:noaccelerators}
Fix $r>1$. Let $G$ be any graph with $N\ge 2$ vertices and let $p=\fp(G,r)$ be the fixation probability of a single mutant under uniform initialization. Then
$$\AT(G,r)\ge \frac pr\cdot N\cdot H_{N-1},$$
where $H_{N-1}=\frac11+\frac12+\dots+\frac1{N-1}\ge \log N$.
In particular, $\AT(G,r)\ge \frac pr\cdot N\log N$ for an arbitrary graph $G$
and $\AT(A,r)\ge \frac{r-1}{r^2}\cdot N\log N $ for an arbitrary amplifier $A$.
\end{theorem}
\begin{proof}
Consider a modified Moran process $M'$ that is identical with the standard Moran process, except that if the mutation goes extinct then in the next step we again initialize a single mutant uniformly at random and continue the process. Clearly, the modified process $M'$ always terminates with the mutants fixating and its expected fixation time is given by $T'(G,r)=\frac1p\cdot \AT(G,r)$.

Given any subset $X$ of the vertices, let $p^X$ be the probability to gain a mutant in a single step from a configuration consisting of mutants at vertices of $X$ and residents elsewhere. 
To gain a mutant, one of the $|X|$ mutants has to be selected for reproduction and then the offspring has to replace a resident. The probability of the first event alone equals $\frac {r|X|}{N+(r-1)|X|}$, hence we get an upper bound
$$p^X\le \frac {r|X|}{N+(r-1)|X|} \le  \frac {r|X|}N  \equiv p_{|X|}$$
that doesn't depend on $X$ but only on $|X|$.

Finally, fix $k\in\{1,2,\dots,N-1\}$ and observe that any evolutionary trajectory in $M'$ has to, at some point, reach a state with $k$ mutants and gain another mutant from there. Hence, in expectation, the evolutionary trace spends at least $\frac 1{p_k}$ steps in states corresponding to configurations with $k$ mutants. By linearity of expectation, summing over $k$ gives
$$T(G,r)=p\cdot T'(G,r)\ge p\sum_{k=1}^{N-1} \frac N{r\cdot k} =\frac pr\cdot N\cdot H_{N-1} 
$$
as desired.
\end{proof}

Several remarks are in order.

First, we emphasize that the proof applies to all graphs, possibly containing directed edges, weighted edges, and/or self-loops.

Second, we note that the same proof goes through for any initialization scheme $\mathcal{S}$ (with $p=\fp(G,r)$ replaced by the fixation probability $p^{\mathcal{S}}$ under that initialization scheme $\mathcal{S}$). Specifically, it applies to temperature initialization and also to schemes in which the first mutant is initialized to a fixed vertex.

Third, we discuss relationship between absorption time and fixation time. Note that Theorem~\ref{thm:noaccelerators} provides a lower bound on the absorption time $\AT(G,r)$ which is a weighted average of the fixation time $T(G,r)$ and the extinction time $\ET(G,r)$. Since the evolutionary trajectories leading to extinction are typically shorter than those leading to fixation, the fixation time tends to be even longer than the absorption time. In fact, the inequality $T(G,r)>\AT(G,r)$ holds for all undirected graphs $G$ and all values $r>1$ that we tested.
On the other hand, there do exist directed graphs for which the opposite inequality $T(G,r)<\AT(G,r)$ holds.
As an example, consider $r=4$ and a graph $G$ consisting of three vertices $\{u,v_1,v_2\}$ and edges $\{u\to v_1, u\to v_2, v_1 \leftrightarrow v_2\}$. Then we easily check that $T(G,4)=3.25$ while $\AT(G,4)=19.25$.
In fact, in terms of fixation time, this graph $G$ is even slightly faster than the complete graph $K_3$, as we have  $T(K_3,4)=3+\frac47>3.25$.

%
%
\pagebreak
\section*{Appendix 2: Uniform initialization, $\eps$-Balanced bipartite graphs}\label{app:2}
\label{balanced}
In this subsection we analyze the $\eps$-Balanced bipartite graph $B_{N,\eps}$.
Recall that $B_{N,\eps}$ consists of $c=N^{1-\eps}$ vertices in the (smaller) center and $N$ vertices in the outside part, each two vertices from different parts connected by an edge.

We prove the following theorem.

\begin{theorem} Fix $\eps\in(0,1]$ and $r>1$. Let $B_{N,\eps}$ be the $\eps$-Balanced bipartite graph. Then
\begin{itemize}
\item $\fp(B_{N,\eps},r)\to 1-1/r^2$.
\begin{itemize}
\item (small center) If $\eps\in(0.5,1)$ then there exist constants $c_1$, $c_2$ such that 
$$c_1\cdot N^{1+\eps}\log N \le \AT(B_{N,\eps},r) \le c_2\cdot N^{1+\eps}\log N.$$
\item (large center) If $\eps\in(0,0.5)$ then there exist constants $c_1$, $c_2$ such that 
$$c_1\cdot N^{1+\eps}\log N \le \AT(B_{N,\eps},r) \le c_2\cdot N^{1+2\eps}\log N.$$
\end{itemize}
\end{itemize}
Moreover, the fixation time $ T(B_{N,\eps},r)$ satisfies the same inequalities.
\end{theorem}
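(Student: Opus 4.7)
The plan is to analyze the Moran process on $B_{N,\eps}$ as a two-dimensional Markov chain on $(a,b)$, where $a\in\{0,\dots,c\}$ counts mutants in the center (with $c=N^{1-\eps}$) and $b\in\{0,\dots,N\}$ counts mutants in the outer part. Basic accounting of total fitness shows that in each Moran step a center vertex is updated with rate $\Theta(1)$ (some outer vertex reproduces into the center), while an outer vertex is updated with rate $\Theta(c/N)=\Theta(N^{-\eps})$ (some center vertex reproduces into the outer). The center is therefore a fast subsystem driven by the slower outer part.

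For the fixation probability, I would exploit this timescale separation. Conditioning on outer count $b$, the center rapidly reaches a quasi-stationary distribution under which each center vertex is, up to lower-order corrections, independently a mutant with probability $p_b=rb/(rb+N-b)$ (the probability that a center-update event installs a mutant rather than a resident, since outer reproduction is fitness-weighted). Averaging the one-step $b$-transition rates against this distribution, the ratio of outer-gain rate to outer-loss rate collapses to exactly $r^2$ via the identity $p_b/(1-p_b)=rb/(N-b)$. So the effective one-dimensional chain on $b$ is Moran-type with bias $r^2$ on $N$ sites, whose fixation probability tends to $1-1/r^2$. Under uniform initialization the mutant starts in the outer part with probability $N/(N+c)\to 1$, so $\fp(B_{N,\eps},r)\to 1-1/r^2$.

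For the lower bound on $\AT$, I would specialize the argument of Theorem~\ref{thm:noaccelerators} to the bipartite structure. The key structural fact: to gain an outer mutant from any state with $b$ outer mutants, some center mutant must reproduce and its offspring must land on an outer resident; since at most $c$ center mutants exist at any time, the gain rate is at most $rc(N-b)/[(N+c)N]$. Applied within the modified-chain framework of Theorem~\ref{thm:noaccelerators}, this yields expected waiting time at least $N^{1+\eps}/[r(N-b)]$ for each required up-transition at level $b$; summing over $b=0,\dots,N-1$ gives $T'\ge (N^{1+\eps}/r)\,H_N$, and since $p\to 1-1/r^2>0$ we conclude $\AT(B_{N,\eps},r)=\Omega(N^{1+\eps}\log N)$. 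For the upper bound I would carry out a phase analysis. In each phase of fixed $b$, after an equilibration period the outer events happen at rate $\Theta(N^{-\eps})$, so each phase costs $\Theta(N^{\eps})$ Moran steps; combining with the $\Theta(N\log N)$ effective outer events needed by the bias-$r^2$ chain yields $O(N^{1+\eps}\log N)$. This is valid when the center equilibrates in time $\ll N^{\eps}$; since the center coupon-collects in $\Theta(N^{1-\eps}\log N)$ Moran steps, this requires $\eps>1/2$, matching the lower bound. When $\eps<1/2$ the equilibration and outer timescales overlap and I would fall back on a coarser estimate (bounding the outer-update rate from above by its value at saturation rather than tracking the QSD) that loses a factor of $N^{\eps}$, giving $O(N^{1+2\eps}\log N)$. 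The fixation-time inequalities follow from the absorption-time inequalities via $T=\Theta(\AT)$, which holds because $p$ is bounded away from $0$ and extinction is substantially faster than fixation.

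The main obstacle is making the quasi-stationary projection rigorous, namely coupling the two-dimensional chain to the one-dimensional effective chain with controlled error. This is delicate in the large-center regime, where the two timescales are comparable and the gap between the lower and upper bounds reflects a genuine proof-technique limitation rather than any believed true difference in the asymptotic behavior.
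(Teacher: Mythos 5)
Your lower bound is essentially the paper's own argument (bound the per-step probability of gaining an outer mutant by $\tfrac{rc}{N^2}(N-k)$, restart on extinction, sum the waiting times), and your identification of the $\eps=1/2$ threshold is the right structural picture. But the substantive half of the theorem is the upper bound, and there your proposal is not a proof: the entire argument runs through a quasi-stationary projection of the center onto the outer count, and you acknowledge yourself that making this coupling rigorous is the ``main obstacle.'' That obstacle is precisely what a proof must overcome, and the fallback you sketch for $\eps<1/2$ (``a coarser estimate that loses a factor of $N^{\eps}$'') is not specified at all. The paper avoids quasi-stationarity entirely. It uses the exact closed-form fixation probabilities for complete bipartite graphs: with $h_v=\tfrac{v+cr}{vr^2+cr}$, $h_c=\tfrac{c+vr}{cr^2+vr}$ and potential $\phi(i,j)=h_v^i h_c^j$, one has $\fp(i,j)=\tfrac{1-\phi(i,j)}{1-\phi(v,c)}$ (this also disposes of the fixation-probability claim in one line, since $\phi(1,0)=h_v\to 1/r^2$ --- no effective chain needed). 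The upper bound then partitions the state space into sections $S_i$ by outer mutant count and bounds the expected number of \emph{backward} crossings $S_{i+1}\to S_i$ by $\tfrac{1-g_{\min}}{g_{\min}-f_{\max}}$, where $f_{\max},g_{\min}$ are extremal fixation probabilities in adjacent sections, computable from $\phi$. For $\eps>1/2$ the key fact is $h_c^c\to 1$, which makes this a constant; for $\eps<1/2$ that fails ($g_{\min}-f_{\max}$ can be negative), and the paper instead uses \emph{tilted} sections $S_k=\{(i,j): i+\lfloor j/t\rfloor=k\}$ with tilt $t$ chosen so $h_c^t=h_v$, which restores monotonicity of the potential across sections but yields $O(N^{\eps})$ backward crossings --- that is the precise, quantified source of the extra $N^{\eps}$ in the large-center bound, not a vague ``coarser estimate.'' Combining with a per-section loop count of $O\bigl(\tfrac{N(N+c)}{c\min\{i,N-i\}}\bigr)$ and summing gives the stated bounds.

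Two further points. First, your derivation of the fixation-time inequalities from the absorption-time inequalities is incomplete: $\AT\ge \fp\cdot T$ does give the upper bound on $T$ from that on $\AT$, but your lower bound on $T$ rests on ``extinction is substantially faster than fixation,'' which you do not prove (and which the paper explicitly notes can fail on directed graphs). The paper instead proves the lower bound on $T$ directly by passing to the conditioned chain with transition probabilities $p''(s\to t)=p(s\to t)\cdot\fp(t)/\fp(s)$ and checking that the renormalization changes the relevant waiting times only by a constant factor, using $\fp(1,0)\to 1-1/r^2$. Second, even in the small-center regime your phase accounting needs a uniform-in-time control of the center composition to conclude that the number of outer events is $O(N\log N)$ with an effective bias bounded away from $1$; the section-crossing argument replaces exactly this step, so you should not treat it as a technicality that can be deferred.
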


\subsubsection*{Martingales background}
First, we recall the following facts about martingales (see~\cite{Monk2014}).
Fix $r>1$.
Given a complete bipartite graph with $v$ vertices at the outside part and $c$ vertices in the center, the state (configuration) space can be parametrized by the number $0\le i\le v$ of mutants in the outside part and the number $0\le j\le c$ of mutants in the center.
For each state $(i,j)$, let $\fp(i,j)$ be the fixation probability starting from that state. There is a formula for $\fp(i,j)$ which can be computed as follows: Let
$$h_v=\frac{v+cr}{vr^2+cr}, \qquad h_c=\frac{c+vr}{cr^2+vr}$$ 
and for every state $(i,j)$ define a potential function $\phi(i,j)=h_v^i\cdot h_c^j$.
(Note that $\phi(i+1,j)=\phi(i,j)\cdot h_v$ and $\phi(i,j+1)=\phi(i,j)\cdot h_c$.)
Then
$$\fp(i,j)=\frac{\phi(0,0)-\phi(i,j)}{\phi(0,0)-\phi(v,c)}=\frac{1-\phi(i,j)}{1-\phi(v,c)}.$$
For the rest of this section, we will be using these results for $c=N^{1-\eps}$ and $v=N$.

\subsubsection*{Fixation probability}
With the martingales background, the analysis of the fixation probability is relatively straightforward.
\begin{lemma}\label{l:bipprob}
Fix $\eps\in(0,1]$ and $r>1$. As $N\to\infty$, we have $\fp(B_{N,\eps},r)\to 1-1/r^2$.
\end{lemma}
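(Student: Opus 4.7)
The plan is to apply the potential-function formula from the martingale background verbatim, together with careful asymptotic estimates of $h_v$ and $h_c$. Since $B_{N,\eps}$ is a complete bipartite graph, by symmetry the fixation probability from any single mutant on the outside part equals $\fp(1,0)$ and from any single mutant in the center equals $\fp(0,1)$. Hence under uniform initialization, with $v=N$ and $c=N^{1-\eps}$,
\begin{equation*}
\fp(B_{N,\eps},r) \;=\; \frac{v\,\fp(1,0) + c\,\fp(0,1)}{v+c} \;=\; \frac{v(1-h_v) + c(1-h_c)}{(v+c)\bigl(1-\phi(v,c)\bigr)}.
\end{equation*}
The remaining work is entirely asymptotic bookkeeping.

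First I would set $s = c/v = N^{-\eps}$ and expand, obtaining
\begin{equation*}
1 - h_v \;=\; \frac{r^2 - 1}{r^2 + sr}, \qquad 1 - h_c \;=\; \frac{s(r^2-1)}{r(1+sr)}.
\end{equation*}
From these, $v(1-h_v) = N\,(1 - 1/r^2)\bigl(1 + O(N^{-\eps})\bigr)$, while $c(1-h_c) = vs^2\cdot\frac{r^2-1}{r(1+sr)} = O(N^{1-2\eps})$. Since $\eps>0$, the first contribution dominates, so the numerator is $N(1-1/r^2)(1+o(1))$, and $v+c = N(1+N^{-\eps}) = N(1+o(1))$.

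Next I would show that the denominator factor satisfies $\phi(v,c) = h_v^{\,v}\,h_c^{\,c} \to 0$. Since $h_v \to 1/r^2$ and $r>1$, for every sufficiently large $N$ there is a constant $q<1$ with $h_v \le q$, giving $h_v^{\,v} \le q^N \to 0$ exponentially fast. The factor $h_c^{\,c}$ is harmless because $h_c \in (0,1)$ gives $h_c^{\,c} \le 1$. Hence $1 - \phi(v,c) \to 1$, and combining the three estimates yields $\fp(B_{N,\eps}, r) \to 1 - 1/r^2$.

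There is no genuine obstacle: once the potential formula is accepted, the argument is a routine limit computation. The only point that deserves a moment of care is verifying $\phi(v,c) \to 0$, so that the denominator does not distort the limit; this follows from $h_v$ being bounded away from $1$ uniformly in $N$. Intuitively, the star-like value $1 - 1/r^2$ emerges because the center is sublinear in $N$: the $O(N^{1-2\eps})$ contribution from a mutant initially placed in the center is swamped by the $\Theta(N)$ contribution from a mutant initially placed outside, and the latter carries the same fixation probability $1-1/r^2$ as a leaf mutant on a star.
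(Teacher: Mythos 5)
Your proof is correct and follows essentially the same route as the paper: both rest on the potential-function formula, the observation that $\phi(1,0)=h_v\to 1/r^2$, and the fact that $\phi(v,c)\le h_v^{\,v}\to 0$. The only difference is that you explicitly bound the $O(N^{1-2\eps})$ contribution of a center-initialized mutant, whereas the paper dismisses it by noting the mutant lands in the center with probability $o(1)$; the conclusions agree.
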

\begin{proof} The original mutant appears at the outside part with probability $N/(N+N^{1-\eps})\to 1$. Since $\phi(1,0)=h_v\to 1/r^2$ and $\phi(v,c)=h_v^v\cdot h_c^c <h_v^v \to 0$ as $N\to\infty$, we compute 
$$\fp(B_{N,\eps},r)=\frac{1-\phi(1,0)}{1-\phi(v,c)}\to_{N\to\infty} 1-1/r^2.
$$
\end{proof}

\subsubsection*{Lower bound on fixation time}
Next, we present the lower bounds for the absorption and fixation time.
The idea is to consider the expected time $t_k$ to gain one mutant in the outside part, if there are currently $k$ mutants there. By bounding those times and summing up we obtain the following lemma.

\begin{lemma}\label{timelower} Fix $\eps\in(0,1]$. Then
$$\AT(B_{N,\eps},r)= \Omega(N^{1+\eps}\log N),\quad T(B_{N,\eps},r)= \Omega(N^{1+\eps}\log N).$$

\end{lemma}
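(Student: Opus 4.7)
The plan is to adapt the visit-count argument from Theorem~\ref{thm:noaccelerators} to the bipartite structure, with a sharper per-state transition bound that produces the $\log N$ factor beyond the naive $\Omega(N^{1+\eps})$ bound one would get by applying Theorem~\ref{thm:noaccelerators} verbatim.

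First I would parametrize states by $(j, k)$, where $j$ counts the center mutants and $k$ the outside mutants. Among the four possible one-step moves on $B_{N,\eps}$, only $(j, k) \to (j, k+1)$ increases $k$: a center mutant must be selected for reproduction and its offspring must land on one of the $N-k$ outside residents. The probability of this transition is
$$
\frac{r j}{c + N + (r-1)(j+k)} \cdot \frac{N - k}{N} \;\le\; \frac{r j (N-k)}{N^{2}} \;\le\; \frac{r(N-k)}{N^{1+\eps}} \;=:\; p_k,
$$
using $c + N + (r-1)(j+k) \ge N$ in the first factor and $j \le c = N^{1-\eps}$ in the last step. The $(N-k)$ factor is what makes the difference compared to Theorem~\ref{thm:noaccelerators}, because it will produce the harmonic sum after summing over $k$.

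Next I would deploy the standard visit-count argument. Let $T_k$ denote the total time the chain spends at level $k$ and $N_k$ the number of $k \to k+1$ transitions. Since every step from level $k$ has probability at most $p_k$ of being an upward jump, $\E[N_k] \le p_k \cdot \E[T_k]$. For the absorption time I would apply this to the modified process $M'$ from Theorem~\ref{thm:noaccelerators}, in which extinction triggers reinitialization: every $M'$-trajectory fixates, so $\E[N_k] \ge 1$ for $k = 1, \dots, N-1$, giving $T'(B_{N,\eps}, r) \ge \sum_{k=1}^{N-1} 1/p_k = (N^{1+\eps}/r)\, H_{N-1}$. Multiplying by $\fp(B_{N,\eps}, r) \to 1 - 1/r^2$ from Lemma~\ref{l:bipprob}, which is bounded away from zero, yields $\AT(B_{N,\eps}, r) = \Omega(N^{1+\eps} \log N)$. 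For the fixation time $T$ I would rerun the same visit-count bound under the measure conditioned on fixation: by Doob's $h$-transform, the conditional probability of $(j,k) \to (j,k+1)$ is at most $p_k / \fp(j, k)$, and the uniform lower bound $\fp(j, k) \ge \fp(0,1) \to 1 - 1/r$, which follows from monotonicity of $\fp$ in each coordinate together with the explicit martingale formula recalled just before Lemma~\ref{l:bipprob}, inflates the harmonic sum only by a positive constant factor.

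The main obstacle I expect is precisely this conditional-measure step for $T$: it is tempting but incorrect to reuse the unconditional bound $p_k$ verbatim, because under Doob's $h$-transform the transition probabilities blow up near configurations where fixation is unlikely; one must check that $\fp(j,k)$ stays bounded away from zero on the relevant part of the state space. The combinatorial bound on $p_k$ and the harmonic-sum calculation are otherwise essentially routine.
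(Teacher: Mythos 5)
Your proposal matches the paper's proof essentially step for step: the same level decomposition by the number of outside-part mutants, the same bound $p_k \le r c (N-k)/N^2 = r(N-k)/N^{1+\eps}$ on the upward transition, the same restarted process for the absorption bound, and the same Doob-transform argument with a constant lower bound on $\fp$ over states with at least one outside mutant for the conditional fixation time. The only blemish is the constant $1-1/r$ for the single-outside-mutant fixation probability, which should be $1-1/r^2$; since either is a positive constant independent of $N$, this does not affect the argument.
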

\begin{proof}
For the absorption time, we proceed as in the proof of Theorem~\ref{thm:noaccelerators}, that is, we restart the process each time the mutants go extinct. The modified process $M'$ always terminates with the mutants fixating and its expected fixation time is given by $T'(B_{N,\eps},r)=\frac1{\fp(B_{N,\eps},r)}\cdot \AT(B_{N,\eps},r)$.

Consider a state with $1\le k\le N-1$ mutants in the outside part and $0\le j\le c$ mutants at the center. Let $F=N+c+(r-1)(j+k)>N$ be the total fitness of the population. The probability that in the next step we gain one mutant in the outside part equals
$$\frac{r\cdot j}F \cdot \frac{N-k}N\le \frac{r\cdot c}{N^2}\cdot (N-k)\equiv p_k.
$$
Since $p_k$ is independent of $j$, the expected time to reach some state with $k+1$ mutants, starting in any state with $k$ mutants in the outside part, is at least
$$ \frac1{p_k} = \frac1r\cdot\frac{N^2}c\cdot\frac1{N-k}\equiv t_k.
$$
In order to fixate, we need to pass through a state with $k$ mutants in the outside part, for each $k=1,\dots,N-1$. By linearity of expectation,
\begin{align*}
\AT(B_{N,\eps},r)&=\fp(B_{N,\eps},r)\cdot T'(B_{N,\eps},r) \\
&\ge \fp(B_{N,\eps},r)\cdot \sum_{k=1}^{N-1} \frac1r\cdot\frac{N^2}c\cdot\frac1{N-k} \to \frac{r^2-1}{r^3}\cdot N^{1+\eps}\cdot\sum_{k=1}^{N-1}\frac1k = \Theta(N^{1+\eps}\log N).
\end{align*}

For the fixation time, we perform a standard construction to obtain a different modified process $M''$ that only includes the trajectories that lead to fixation. Specifically, we remove the state $(0,0)$ (the only state $s$ with $\fp(s)=0$) and, for any two other states $s$ and $t$, we renormalize the transition probability $p(s\to t)$ to a new value $p''(s\to t)=p(s\to t)\cdot\frac{\fp(t)}{\fp(s)}$. It is a standard result that in this way we have constructed a Markov chain with only one absorbing state whose absorption time is equal to the fixation time of the original process, that is, $T(B_{N,\eps},r)=\AT''(B_{N,\eps},r)$. To get a lower bound for $\AT''(B_{N,\eps},r)$, we proceed as before.

Due to the renormalization, each $p_k$ ($k=1,\dots,N-1$) gets multiplied by a ratio of two fixation probabilities that can be upper bounded by $$\frac{\max_j\{\fp(k+1,j)\} }{\min_j\{\fp(k,j)\} }.$$ Note that for $k\ge 1$ the denominator is at least a constant (recall that $\fp(1,0)\to 1-1/r^2$ for large $N$), hence the ratio can be further upper bounded by $1/c_0$ for any $c_0<1-1/r^2$ and $N\to \infty$. Hence $t''_k= 1/p''_k \ge c_0/p_k$. This gives
\begin{align*}
T(B_{N,\eps},r)&=\AT''(B_{N,\eps},r) \ge
\sum_{k=1}^{N-1} t''_k \ge c_0 \sum_{k=1}^{N-1} \frac1r\cdot\frac{N^2}c\cdot\frac1{N-k} 
&= \frac{c_0}{r}\cdot N^{1+\eps}\cdot\sum_{k=1}^{N-1}\frac1k = \Theta(N^{1+\eps}\log N)
\end{align*}
as desired.
\end{proof}

\subsubsection*{Upper bound: ``small'' center}
For the upper bound, we distinguish two cases. First, we assume that $\eps\in(1/2,1]$, that is $c=o(\sqrt N)$.

The idea is to again work with the restarted process and moreover to split the set of states into \textit{sections} as follows: section $S_i$ consists of all the states with $i$ mutants in the outside part.
Then we consider a Markov chain $\Em'$ whose nodes are the sections $S_i$. By construction, the only transitions with nonzero probability are of the form $S_i\to S_{i\pm1}$ or $S_i\to S_i$. In the following sequence of Lemmas, we provide upper bounds for the expected number of transitions from $S_{i+1}$ to $S_i$ and for the expected number of transitions within each $S_i$. Summing up, we obtain an upper bound for the fixation time in the original Markov chain.

Formally, fix $i$ and let
\begin{itemize}
\item $f_{\max}=\max_j\{\fp(i,j)\}$ be the maximum fixation probability from a state in $S_i$. Clearly, $f_{\max}$ is attained in state $(i,c)$.
\item $g_{\min}=\min_j\{\fp(i+1,j)\}$ be the minimum fixation probability from a state in $S_{i+1}$. Clearly, $g_{\min}$ is attained in state $(i+1,0)$.
\item $q=\min_j \{q_j\}$ where $q_j$ is the probability that an evolutionary trajectory starting at $(i+1,j)$ fixates at $(v,c)$ before visiting any state in $S_i$.
\end{itemize}

First, since $\eps>1/2$ we have the following:
\begin{lemma}\label{l:sections} $h_c^c\to_{N\to\infty}1$ and $f_{\max}<g_{\min}$ (for large enough $N$)
\end{lemma}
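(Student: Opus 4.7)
The plan is to handle the two claims by direct computation using the explicit martingale formulas $h_v=(v+cr)/(vr^2+cr)$, $h_c=(c+vr)/(cr^2+vr)$, and $\fp(i,j)=(1-h_v^i h_c^j)/(1-h_v^v h_c^c)$ with $v=N$ and $c=N^{1-\eps}$. The crucial place where the assumption $\eps>1/2$ enters is that $c^2/N=N^{1-2\eps}\to 0$, which will be what allows us to conclude $h_c^c\to 1$ in spite of $h_c<1$.

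First I would estimate $1-h_c$. Writing $h_c=1-c(r^2-1)/(cr^2+Nr)$, I get $1-h_c=\Theta(c/N)=\Theta(N^{-\eps})$, and in particular $1-h_c\to 0$. Taking logarithms and using $\log(1-x)=-x+O(x^2)$ for small $x$, I obtain
\[
\log h_c^{\,c}=c\log h_c = -c\cdot\frac{c(r^2-1)}{cr^2+Nr}+O(c\cdot(1-h_c)^2)=\Theta\!\left(-\frac{c^2}{N}\right)=\Theta(-N^{1-2\eps}).
\]
Since $\eps>1/2$, the exponent $1-2\eps$ is negative, so $c^2/N\to 0$ and hence $h_c^{\,c}\to 1$, establishing the first claim.

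For the second claim, observe that $f_{\max}$ is attained at the state $(i,c)$ (since $\fp(i,j)$ is increasing in $j$) and $g_{\min}$ at $(i+1,0)$, so
\[
f_{\max}=\frac{1-h_v^{\,i}h_c^{\,c}}{1-h_v^{\,v}h_c^{\,c}}\quad\text{and}\quad g_{\min}=\frac{1-h_v^{\,i+1}}{1-h_v^{\,v}h_c^{\,c}}.
\]
The denominators agree, so $f_{\max}<g_{\min}$ reduces to $h_v^{\,i+1}<h_v^{\,i}h_c^{\,c}$, i.e.\ to the single inequality $h_v<h_c^{\,c}$. Since $h_v=(v+cr)/(vr^2+cr)\to 1/r^2$ (because $c/v=N^{-\eps}\to 0$) while $h_c^{\,c}\to 1$ by the first part, and since $1/r^2<1$ for $r>1$, the inequality $h_v<h_c^{\,c}$ holds for all sufficiently large $N$, independently of $i$.

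I do not expect any serious obstacle here; the proof is essentially bookkeeping on the martingale expressions. The only subtle point is to make sure the error term in the expansion $\log(1-x)=-x+O(x^2)$ is negligible, which is automatic because $c\cdot(1-h_c)^2=O(c^3/N^2)=O(N^{1-3\eps})=o(N^{1-2\eps})$ for $\eps>0$. The role of the hypothesis $\eps>1/2$ is precisely to force $c^2/N\to 0$; if instead $\eps\le 1/2$ then $h_c^{\,c}$ need not approach $1$, which is exactly why the ``large center'' regime requires a different (weaker) upper bound in the main theorem.
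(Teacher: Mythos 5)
Your proof is correct and follows essentially the same route as the paper: estimate $1-h_c=\Theta(c/N)=\Theta(N^{-\eps})$, use $c=N^{1-\eps}=o(N^{\eps})$ for $\eps>1/2$ to get $h_c^{\,c}\to1$, and then reduce $f_{\max}<g_{\min}$ via the potential function to the single inequality $h_v<h_c^{\,c}$, which holds since $h_v\to 1/r^2<1$. Your logarithmic bookkeeping with the explicit $O(c(1-h_c)^2)$ error term is in fact a bit more careful than the paper's informal exponent-comparison argument, but the substance is identical.
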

\begin{proof}
We have
$$h_c^c\approx \left(1-\frac{r-1/r}{N^\eps}\right)^{(N^{1-\eps})}$$
For $N\to \infty$ we have $N^\eps\to\infty$. If the parenthesis was raised to power $N^\eps$, the limit would have been $\exp(-(r-1/r))$, a constant. Since $N^{1-\eps}=o(N^\eps)$ for $\eps>1/2$, we have $\lim_{N\to\infty} h_c^c=1$. Hence $h_c^c>h_v$, then $\phi(i,c)=h_c^c\cdot \phi(i,0)>h_v\cdot \phi(i,0)=\phi(i+1,0)$ and thus $f_{\max}=\fp(i,c)<\fp(i+1,0)=g_{\min}$ as desired.
\end{proof}

We aim to bound $q$ from below and use it to bound the expected number $X$ of transitions from (any state in) $S_{i+1}$ to (any state in) $S_i$ from above.

\begin{lemma}\label{l:qbyfprobs} $q \ge\frac{g_{\min}-f_{\max}}{1-f_{\max}}$
\end{lemma}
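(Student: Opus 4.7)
The inequality is a standard first-passage decomposition of fixation probabilities across adjacent sections, and the work is entirely conceptual rather than computational. First, I would observe that because the number of mutants in the outside part changes by at most one per Moran step, no trajectory starting in $S_{i+1}$ can reach the extinction state $(0,0)$ without first passing through $S_i$. Consequently, every trajectory from $(i+1, j)$ falls into exactly one of two disjoint categories: either (a) it fixates without ever entering $S_i$ (the event of probability $q_j$ by definition), or (b) it enters $S_i$ at some first time $\tau$ at a random state $Y_\tau \in S_i$, after which it ultimately either fixates or goes extinct. In particular, $\Pr[\tau < \infty] = 1 - q_j$.

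Next I would apply the strong Markov property at the stopping time $\tau$: conditional on $\tau < \infty$ and $Y_\tau = (i, j')$, the continuation is a fresh Moran process started from $(i, j')$, so its conditional fixation probability equals $\fp(i, j') \le f_{\max}$. Averaging over $Y_\tau$ and applying the law of total probability gives
$$\fp(i+1,j) \;=\; q_j \;+\; (1 - q_j)\cdot \E[\fp(Y_\tau) \mid \tau < \infty] \;\le\; q_j + (1 - q_j)\, f_{\max}.$$
Combining this with the lower bound $\fp(i+1,j) \ge g_{\min}$ and solving for $q_j$, where the denominator $1 - f_{\max}$ is positive because Lemma~\ref{l:sections} gives $f_{\max} < g_{\min} \le 1$, one obtains
$$q_j \;\ge\; \frac{\fp(i+1,j) - f_{\max}}{1 - f_{\max}} \;\ge\; \frac{g_{\min} - f_{\max}}{1 - f_{\max}}.$$
Taking the minimum over $j$ on the left yields the claim.

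The only step that requires any real care is the opening observation that cases (a) and (b) genuinely partition all trajectories from $S_{i+1}$; this rests on the fact that extinction without ever visiting $S_i$ is impossible, which in turn is just the unit step-size of the outside-part count. Once that is nailed down, everything else reduces to the strong Markov property and a one-line algebraic rearrangement, so I do not anticipate any true obstacle. It is worth noting that the argument uses essentially no structure of $B_{N,\eps}$ beyond the section decomposition and unit-step transitions, which is why the same style of bound will likely be reusable for the weighted bipartite graphs in Appendix 3.
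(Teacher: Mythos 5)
Your proposal is correct and follows essentially the same route as the paper: condition on whether the trajectory from $(i+1,j)$ fixates before hitting $S_i$, use that extinction without visiting $S_i$ is impossible, bound the continuation by $f_{\max}$, and rearrange. The extra care you take with the strong Markov property and the positivity of the denominator is fine but not a different argument.
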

\begin{proof}
Let's run an evolutionary trajectory from some state $(i+1,j)$ in $S_{i+1}$. The trajectory can't go extinct without hitting $S_i$. Conditioning on if the trajectory first fixates or hits $S_i$, we can write
$$g_{\min}\le \fp(i+1,j) \le q_j \cdot 1 + (1-q_j)\cdot f_{\max}$$
which rewrites as $$q_j\ge\frac{g_{\min}-f_{\max}}{1-f_{\max}}.$$
This is true for every $j$, hence it is true for $q =\min_j \{q_j\}$ too.
\end{proof}

Let $X$ be a random variable counting the transitions from any state in $S_{i+1}$ to any state in $S_i$, starting from any state.
\begin{lemma}\label{l:xbyfprobs} $\E[X]\le \frac{1-q}q = \frac{1-g_{\min}} {g_{\min}-f_{\max}}.$
\end{lemma}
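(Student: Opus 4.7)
The plan is to show that $X$ is stochastically dominated by a geometric random variable counting failures before the first success with success probability $q$, giving $\E[X] \le (1-q)/q$, and then to read off the final equality $(1-q)/q = (1-g_{\min})/(g_{\min}-f_{\max})$ from Lemma~\ref{l:qbyfprobs}. Since $(1-q)/q$ is a decreasing function of $q$, substituting the lower bound $q \ge (g_{\min}-f_{\max})/(1-f_{\max})$ yields the stated upper bound on $\E[X]$.

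For the stochastic domination I would use $\E[X] = \sum_{\ell \ge 1} \Pr[X \ge \ell]$ and prove by induction on $\ell$ that $\Pr[X \ge \ell] \le (1-q)^\ell$. Let $\tau^{(1)} < \tau^{(2)} < \cdots$ be the successive times at which the trajectory performs an $S_{i+1} \to S_i$ transition, so that $\{X \ge \ell\} = \{\tau^{(\ell)} < \infty\}$. For the base case $\ell = 1$: if $\tau^{(1)} < \infty$ then the trajectory must first enter $S_{i+1}$ at some state $(i+1, j_1)$, and by the definition of $q_{j_1}$ the conditional probability of subsequently visiting $S_i$ is $1 - q_{j_1} \le 1 - q$. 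For the inductive step: at time $\tau^{(\ell-1)} + 1$ the trajectory sits in $S_i$, so by the strong Markov property another down transition can occur only if the trajectory first re-enters $S_{i+1}$, say at state $(i+1, j')$, and from there visits $S_i$ again; the latter conditional probability is $\le 1 - q_{j'} \le 1 - q$, uniformly in the past. Summing the geometric series $\sum_{\ell \ge 1}(1-q)^\ell = (1-q)/q$ finishes the bound.

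The main point to get right is the event identification needed to apply the strong Markov property: ``at least one additional $S_{i+1}\to S_i$ transition occurs'' coincides with ``upon re-entering $S_{i+1}$, the trajectory eventually visits $S_i$'', which holds because every down transition into $S_i$ is immediately preceded by a state in $S_{i+1}$. With this identification in hand the induction uses only the uniform lower bound $q = \min_j q_j$, and is therefore insensitive to the trajectory's starting state and to whether the trajectory ultimately fixates or goes extinct. The remainder of the lemma is then just the algebraic simplification of $(1-q)/q$ under the substitution coming from Lemma~\ref{l:qbyfprobs}.
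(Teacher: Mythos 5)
Your proposal is correct and follows essentially the same route as the paper: both arguments rest on the observation that consecutive $S_{i+1}\to S_i$ transitions are separated by a visit to $S_{i+1}$, from which the trajectory escapes to fixation with probability at least $q$, yielding a geometric bound $\E[X]\le(1-q)/q$ that is then simplified via Lemma~\ref{l:qbyfprobs}. Your tail-sum induction is just a slightly more careful rendering of the paper's one-line recursion $\E[X]\le(1-q)(1+\E[X])$.
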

\begin{proof}
Any two transitions from section $S_{i+1}$ to section $S_i$ are necessarily separated by an intermediate visit to section $S_{i+1}$. Any time we are in section $S_{i+1}$, with probability at least $q$ we fixate before hitting section $S_i$ again. Hence 
$$ E[X]\le q\cdot 0 + (1-q)(1+E[X]).
$$
Rewriting and using the bound for $q$ we obtain
$$\E[X]\le \frac{1-q}q =\frac1q-1 = \frac{1-f_{\max}} {g_{\min}-f_{\max}} -1=  \frac{1-g_{\min}} {g_{\min}-f_{\max}}.
$$
\end{proof}

Rewriting $g_{\min}$ and $f_{\max}$ in terms of $h_v$, $h_c$ we deduce that $\E[X]$ is constant.
\begin{lemma}\label{l:xconst} $\E[X]\le \frac{1}{r^2-1}$ (for large enough $N$)
\end{lemma}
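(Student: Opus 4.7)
The plan is to start from Lemma~\ref{l:xbyfprobs}, which already gives $\E[X]\le (1-g_{\min})/(g_{\min}-f_{\max})$, and substitute the explicit martingale expressions $g_{\min}=\fp(i+1,0)=(1-\phi(i+1,0))/(1-\phi(v,c))$ and $f_{\max}=\fp(i,c)=(1-\phi(i,c))/(1-\phi(v,c))$. The common denominator $1-\phi(v,c)$ cancels from numerator and denominator, leaving
$$\E[X]\;\le\;\frac{\phi(i+1,0)-\phi(v,c)}{\phi(i,c)-\phi(i+1,0)}.$$

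Next I would bound the numerator by $\phi(i+1,0)=h_v^{i+1}$ (using $\phi(v,c)\ge 0$) and factor the denominator as $h_v^i h_c^c - h_v^{i+1}=h_v^i(h_c^c-h_v)$, which is positive for large enough $N$ by Lemma~\ref{l:sections}. Cancelling the common factor $h_v^i$, the bound reduces to
$$\E[X]\;\le\;\frac{h_v}{h_c^c-h_v},$$
a quantity that no longer depends on the state indices $i$ or $j$.

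The final step is to take $N\to\infty$. Lemma~\ref{l:sections} gives $h_c^c\to 1$, and a direct computation from $h_v=(v+cr)/(vr^2+cr)$ with $c/v=N^{-\eps}\to 0$ gives $h_v\to 1/r^2$. Therefore the right-hand side tends to $(1/r^2)/(1-1/r^2)=1/(r^2-1)$, and for all sufficiently large $N$ the bound is at most this constant (up to the vanishing correction absorbed into the clause ``for large enough $N$'' in the statement).

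The main obstacle, such as it is, is bookkeeping rather than conceptual: the cancellation that eliminates the dependence on $i$ and $j$ depends crucially on the product form $\phi(i,j)=h_v^i h_c^c$ of the martingale potential, so one must be careful not to bound the numerator too loosely and destroy the matching $h_v^i$ factor in the denominator. Once the algebra is arranged so that both the factor $1-\phi(v,c)$ in the denominator of the fixation-probability formula and the factor $h_v^i$ from the potential cancel, the remaining expression depends only on $h_v$ and $h_c^c$, and Lemma~\ref{l:sections} immediately supplies the desired constant.
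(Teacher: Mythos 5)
Your argument is essentially identical to the paper's proof: both start from Lemma~\ref{l:xbyfprobs}, cancel the common denominator $1-\phi(v,c)$, drop $\phi(v,c)$ from the numerator, factor out $h_v^i$ to reach the state-independent bound $h_v/(h_c^c-h_v)$, and pass to the limit via Lemma~\ref{l:sections} and $h_v\to 1/r^2$. The only difference is cosmetic bookkeeping, so there is nothing to add.
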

\begin{proof}
Recall that $\fp(i,j)=\frac{1-\phi(i,j)}{d}$ where $d=1-\phi(v,c)$ doesn't depend on $i$, $j$. Plugging this in the bound from Lemma~\ref{l:xbyfprobs} we get
\begin{align*}
\E[X]&\le \frac{1-g_{\min}} {g_{\min}-f_{\max}} =  \frac{1-\frac{1-\phi(i+1,0)}d } {\frac{1-\phi(i+1,0)}d-\frac{1-\phi(i,c)}d } \\
&= \frac{ d - (1-\phi(i+1,0))}{1-\phi(i+1,0) - (1-\phi(i,c))} < \frac{\phi(i+1,0)}{\phi(i,c)-\phi(i+1,0)}.
\end{align*}
Using the definition $\phi(i,j)=h_v^ih_c^j$ and dividing by $h_v^i$ this can be further rewritten as
$$\E[X]<\frac{\phi(i+1,0)}{\phi(i,c)-\phi(i+1,0)} = \frac{h_v}{h_c^c-h_v} \to_{N\to\infty} \frac{1/r^2}{1-1/r^2}
$$
as desired.
\end{proof}

Let $\E[\El_i]$ be the expected number of ``looping'' transitions of the form $S_i\to S_i$ before a transition of the form $S_i\to S_{i\pm1}$ occurs (or the process reaches an absorbing state). The following lemma bounds $\E[\El_i]$ from above.

\begin{lemma}\label{l:loops} For $i=1,2,\dots,N-1$ we have $\E[\El_i]\le \frac{r\cdot N(N+c)}{c\cdot \min\{i,N-i\}}-1$.
Moreover, $\E[\El_0]\le r(N+c)-1$ and $\E[\El_N]\le r(N+c)-1$.
\end{lemma}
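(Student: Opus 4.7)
The plan is to compute the one-step probability $p_{\mathrm{leave}}(i,j)$ of leaving section $S_i$ from state $(i,j)$, uniformly lower bound it in $j$, and then convert this into a bound on the expected dwell time via stochastic domination by a geometric random variable.

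First I would observe that because the graph is bipartite, any change of $i$ requires a reproduction event in the center part: a center mutant reproducing (probability $rj/F$) lands on a uniformly random outside vertex, which is a resident with probability $(N-i)/N$ and yields $i \to i+1$; a center resident reproducing (probability $(c-j)/F$) lands on a mutant with probability $i/N$ and yields $i \to i-1$. Writing $F(i,j) = N + c + (r-1)(i+j) \le r(N+c)$, this gives
$$p_{\mathrm{leave}}(i,j) \;=\; \frac{rj(N-i) + (c-j)\,i}{N \cdot F(i,j)}.$$

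The main technical step is to minimize the numerator over $j \in [0,c]$. It is linear in $j$ with slope $r(N-i) - i$, so its minimum on $[0,c]$ is attained at an endpoint: at $j=0$ it equals $ci$, and at $j=c$ it equals $rc(N-i)$. Taking the smaller endpoint and using $r \ge 1$ in the second case, the numerator is at least $c \cdot \min\{i, N-i\}$ uniformly in $j$. Combined with $F(i,j) \le r(N+c)$, this gives
$$p_{\mathrm{leave}}(i,j) \;\ge\; \frac{c \cdot \min\{i, N-i\}}{r N (N+c)} \qquad (1 \le i \le N-1).$$
Because this bound holds at every state in $S_i$ regardless of how $j$ evolves internally, the number of steps spent in $S_i$ during one visit is stochastically dominated by a geometric random variable with this success probability, so its expectation is at most $rN(N+c)/(c \min\{i, N-i\})$. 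Subtracting $1$ for the final outgoing step yields the claimed bound on $\E[\El_i]$ in the range $1 \le i \le N-1$.

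For the boundary cases $i = 0$ and $i = N$, I would argue directly inside $M'$. At $(0, j)$ with $j \ge 1$, $p_{\mathrm{leave}}(0, j) = rj/F(0, j) \ge 1/(N+c)$, while at the extinction state $(0, 0)$ the restart jumps to $S_1$ with probability $N/(N+c) \ge 1/(N+c)$, so the uniform lower bound $1/(N+c) \ge 1/(r(N+c))$ holds throughout $S_0$. Symmetrically, at $(N, j)$ with $j < c$ one has $p_{\mathrm{leave}}(N, j) = (c-j)/F(N, j) \ge 1/(r(N+c))$, and $(N, c)$ is absorbing so the count stops there. The same geometric-domination argument then yields $\E[\El_0], \E[\El_N] \le r(N+c) - 1$.

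The main obstacle will be the case analysis in the third paragraph: one must verify in both sign regimes of the slope $r(N-i)-i$ that the endpoint minimum of the numerator is at least $c \cdot \min\{i, N-i\}$, and then confirm that the stochastic-domination argument really requires only a uniform one-step lower bound on $p_{\mathrm{leave}}$ and no further joint control on the internal evolution of $j$ within a visit to $S_i$.
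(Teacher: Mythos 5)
Your proposal is correct and follows essentially the same route as the paper: a uniform one-step lower bound on the probability of leaving the section, namely $p\ge \frac{c}{r(N+c)}\cdot\frac{\min\{i,N-i\}}{N}$, converted to $\E[\El_i]\le \frac1p-1$ by geometric domination. The paper obtains the same bound slightly more directly (any center vertex, whatever its type, sees at least $\min\{i,N-i\}$ outside vertices of the opposite type), whereas you minimize the explicit numerator over $j$; the two derivations are equivalent.
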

\begin{proof} Crudely (not caring about $r$). First, let $i=1,\dots,N-1$. We pick a vertex in the center with probability at least $\frac{1\cdot c}{r(N+c)}$. No matter its type, there are at least $\min\{i,N-i\}$ vertices of the other type at the outside part. Hence with probability
$$p\ge \frac{1\cdot c}{r(N+c)} \cdot \frac{\min\{i,N-i\}}N$$
we transition to section $S_{i\pm 1}$ in one step.  As before, we get the result from
$\E[\El_i]\le \frac1p-1$.
Second, if $i=0$ or $i=N$ and we are not in an absorbing state then there exists a vertex in the center whose type is different to the type of all vertices in the outside part. Hence $p\ge \frac1{r(N+c)}$ and we conclude as in the first case.
\end{proof}

We are ready to sum those contributions up.

\begin{lemma}\label{smallcenter} If $\eps\in(1/2,1]$, $c=N^{1-\eps}=o(\sqrt N)$ and $r>1$ then $$\AT(B_{N,\eps},r) = O\left(\frac{N^2}c\cdot\log N\right)=O(N^{1+\eps}\log N).$$
\end{lemma}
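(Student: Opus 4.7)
The plan is to decompose the absorption time as a sum over sections. Writing $V_i$ for the expected number of section visits to $S_i$ over the course of the Moran process, and using Lemma~\ref{l:loops} for the per-visit looping contribution,
$$\AT(B_{N,\eps},r) \le \sum_{i=0}^N V_i \cdot (\E[\El_i]+1).$$
The per-visit factor is $O(N)$ for $i \in \{0, N\}$ and at most $\frac{rN(N+c)}{c\min(i,N-i)}$ for $1 \le i \le N-1$ by Lemma~\ref{l:loops}. The nontrivial task is to show that $\E[V_i] = O(1)$ uniformly in $i$.

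For the visit counts I would combine Lemma~\ref{l:xconst} with a deterministic crossing identity. Let $B_i$ and $F_i$ denote the total numbers of backward ($S_{i+1}\to S_i$) and forward ($S_i\to S_{i+1}$) section transitions along a sample trajectory. Every visit to $S_i$ for $1\le i\le N-1$ is either the initial state or is entered through one of the two neighbouring levels, so $V_i \le 1 + F_{i-1} + B_i$. Any sample path crosses each level alternately up and down, so $|F_{i-1} - B_{i-1}| \le 1$ deterministically, hence $F_{i-1} \le B_{i-1}+1$. Lemma~\ref{l:xconst} supplies the key estimate $\E[B_i] \le 1/(r^2-1) = O(1)$ uniformly in $i$, which gives $\E[V_i] = O(1)$; the boundary cases $i=0$ and $i=N$ admit only one neighbour and are handled identically.

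Plugging these bounds into the decomposition and using $c = o(\sqrt N)$ so that $N+c = \Theta(N)$,
$$\AT(B_{N,\eps},r) = O(N) + O\!\left(\frac{N^2}{c}\right) \sum_{i=1}^{N-1} \frac{1}{\min(i,N-i)} = O\!\left(\frac{N^2}{c}\log N\right),$$
by the standard estimate $\sum_{i=1}^{N-1} 1/\min(i,N-i) = \Theta(\log N)$. Since $c = N^{1-\eps}$, this is precisely $O(N^{1+\eps}\log N)$, as claimed.

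The crux of the whole argument is really the $O(1)$ uniform bound on $\E[B_i]$ supplied by Lemma~\ref{l:xconst}, which itself rests on the separation $g_{\min}-f_{\max}$ being bounded below by a positive constant --- and hence on the small-center condition $\eps > 1/2$ (via $h_c^c \to 1$, Lemma~\ref{l:sections}). The rest is bookkeeping: a one-line flow identity together with a harmonic-sum estimate. I do not anticipate any genuine difficulty beyond careful accounting at the absorbing boundary sections $S_0$ and $S_N$.
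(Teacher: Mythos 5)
Your proposal is correct and follows essentially the same route as the paper's proof: decompose the absorption time by sections, bound the expected number of backward crossings $S_{i+1}\to S_i$ by the constant $1/(r^2-1)$ via Lemma~\ref{l:xconst}, convert forward crossings to backward crossings plus one by the alternation identity, multiply by the per-visit looping bound of Lemma~\ref{l:loops}, and sum the resulting harmonic series. The only cosmetic difference is that you work with the original process directly while the paper phrases the count through the restarted process $M'$; the accounting is otherwise identical.
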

\begin{proof} 
As in the proof of Theorem~\ref{thm:noaccelerators} we restart the process anytime the mutants fixate.
Consider the one-dimensional Markov chain $\Em'$ whose vertices are the sections $S_i$, $i=0,\dots,N$. Fix $i\in\{1,\dots,N-1\}$ and let $f(r)=\frac{1}{r^2-1}$. On average, there are at most $f(r)$ transitions $S_{i+1}\to S_i$. Also, on average there are at most $f(r)$ transitions $S_i\to S_{i-1}$, hence there are at most $f(r)+1$ transitions $S_{i-1}\to S_i$ for a total of at most $2f(r)+1$ transitions from outside of $S_i$ to $S_i$. Similarly, on average there are at most $f(r)$ transitions into $S_0$ and at most $f(r)+1$ transitions into $S_N$. Every time there is a transition into $S_i$, there are on average $\E[\El_i]$ transitions within $S_i$. By linearity of expectation, the total expected number of transitions is at most
\begin{align*}
\AT(B_{N,\eps},r) &= \fp(B_{N,\eps},r)\cdot T'(B_{N,\eps},r) \\
 &\le \fp(B_{N,\eps},r)\cdot\left( f(r)\cdot r(N+c)+ \left(\sum_{i=1}^{N-1} (2f(r)+1)(1+\E[\El_i])\right) +(f(r)+1)\cdot r(N+c) \right)\\
  &= \fp(B_{N,\eps},r)\cdot (2f(r)+1)\cdot r(N+c)\cdot \left(1+\frac Nc \sum_{i=1}^{N-1}\frac1{\min\{i,N-i\}}\right) =  \Theta\left(\frac{N^2}c\cdot\log N\right),
\end{align*}
where the last equality follows from the sum being $\Theta(2\log(N/2))=\Theta(\log N)$ and from $c=o(N)$.
\end{proof}

\subsubsection*{Upper bound: ``large'' center}
Note that the argument used for small center fails for $\eps\le 1/2$ because the difference $g_{\min}-f_{\max}$ becomes zero or even negative. Indeed, for $\eps= 1/2$ we have $h_c^c\to_{N\to\infty} \exp(-(r-1/r))<1/r^2$
 and for $\eps<1/2$ the inequality is even stricter.
However an analogous argument can be made to work if we split the state space into different ``tilted'' sections, taking $\eps$ into account.
The idea of the proof is that we fix $\eps\in(0,1/2)$, consider large $N$, and look at a complete bipartite graph $B_{N,\eps}$. We assume that $r$ is such that there exists an integer $t$ called ``tilt'' satisfying $h_c^t=h_v$.
This assumption guarantees that the states $(i,j+t)$ and $(i+1,j)$ are assigned exactly the same potential.
We can then split the state (configuration) space into $\Theta(N)$ sections where each section is not a vertical line but a set of $c$ states that looks like a line tilted with slope $-t$ (see figure). We then proceed as before, providing an upper bound for the number of transitions across sections and within sections. The result follows by summing up.

\begin{figure}[h] 
  \centering
  \includegraphics[scale=1]{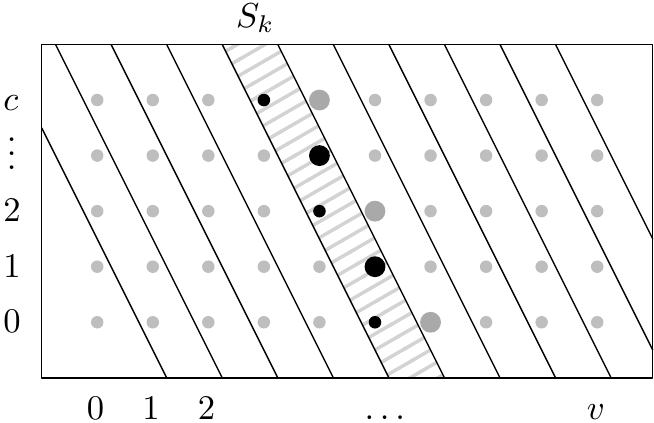}
      
\caption{\footnotesize{\textbf{Tilted sections of $B_{N,\eps}$.} For $\eps\in(0,1/2)$ we split the state (configuration) space into $\Theta(N)$ ``tilted'' sections $S_k$. Here the tilt is $t=2$. The maximum potential within $S_k$ is attained at any thick black vertex, the minimum potential within $S_{k+1}$ is attained at any thick grey vertex.}}
\label{fig:biptime}
\end{figure}

In the rest of the section we formalize this idea.
First, we define the (tilted) sections. Let $s=v+\lfloor c/t\rfloor$ and for $k=0,\dots,s$ let
$$S_k=\left\{(i,j) : i+\lfloor j/t\rfloor = k \right\}.$$
As before, we fix $k$ and define
\begin{itemize}
\item $f_{\max}=\max \{\fp(i,j) : (i,j)\in S_k\}$,
\item $g_{\min}=\min \{\fp(i,j) : (i,j)\in S_{k+1}\}$, and
\item $q=\min_{(i,j)\in S_{k+1}} \{q_{(i,j)}\}$ where $q_{(i,j)}$ is the probability that an evolutionary trajectory starting at state $(i,j)$ belonging to $S_{k+1}$ fixates at $(v,c)$ before visiting any state in $S_k$.
\end{itemize}
Clearly, $f_{\max}$ is attained for any ``top'' state of $S_k$ within its column (possibly not in the ``top'' row).
Similarly, $g_{\min}$ is attained in state ``bottom'' state of $S_{k+1}$ (possibly not in the ``bottom'' row).
Note that by construction, those two states are assigned potentials that differ by a factor of $h_c$.

As before, let $X$ be a random variable counting the transitions from any state in $S_{k+1}$ to any state in $S_k$, starting from any state. The following lemma bounds $\E[X]$ from above. Note that this time the bound is super-constant.
\begin{lemma}\label{l:xbounded} $\E[X]\le \frac{r}{r^2-1}\cdot N^\eps$ (for large $N$)
\end{lemma}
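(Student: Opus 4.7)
The proof will follow the same template as the small-center case (Lemmas \ref{l:qbyfprobs}, \ref{l:xbyfprobs}, \ref{l:xconst}), with the tilted sections $S_k$ replacing the vertical ones. The key new ingredient is the defining identity $h_c^t = h_v$ of the tilt: this forces the potential $\phi$ to depend on $(i,j)$ only through $k$ and the block-offset $s = j - (k-i)t \in \{0,\dots,t-1\}$, which is exactly what makes the final bound independent of $k$.

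First, I would observe that Lemmas \ref{l:qbyfprobs} and \ref{l:xbyfprobs} still apply verbatim. The only structural property of the sections used in their proofs is that one step of the Moran process changes the section index by at most $1$; here a step modifies either $i$ or $j$ by $\pm 1$, so $k = i + \lfloor j/t \rfloor$ changes by at most $1$ as well. In particular, every trajectory leaving $S_{k+1}$ must pass through $S_k$ before reaching extinction, so the same derivation gives
\[
\E[X] \;\le\; \frac{1 - g_{\min}}{g_{\min} - f_{\max}} \;=\; \frac{\phi_g - \phi(v,c)}{\phi_f - \phi_g} \;\le\; \frac{\phi_g}{\phi_f - \phi_g},
\]
using $\fp(i,j) = (1-\phi(i,j))/d$ and $\phi(v,c) \ge 0$.

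Next I would pin down $\phi_f$ and $\phi_g$ explicitly. For $(i,j) \in S_k$ with $j = (k-i)t + s$, the tilt identity gives $\phi(i,j) = h_v^i h_c^{(k-i)t+s} = h_v^k h_c^s$, which depends only on $k$ and $s$. Hence the minimum potential in $S_k$ is $\phi_f = h_v^k h_c^{t-1}$ (attained at $s=t-1$) and the maximum potential in $S_{k+1}$ is $\phi_g = h_v^{k+1}$ (attained at $s=0$), and consequently
\[
\phi_f - \phi_g \;=\; h_v^k h_c^{t-1} - h_v^{k+1} \;=\; h_v^k h_c^{t-1}(1 - h_c).
\]
Substituting, the $k$-dependence cancels and
\[
\E[X] \;\le\; \frac{h_v^{k+1}}{h_v^k h_c^{t-1}(1 - h_c)} \;=\; \frac{h_c}{1 - h_c}.
\]
Finally, from $h_c = (c+vr)/(cr^2+vr)$ with $c = N^{1-\eps}$ and $v = N$, a direct computation gives $1 - h_c = (r^2-1)/(r^2 + N^\eps r)$, so
\[
\frac{h_c}{1-h_c} \;=\; \frac{1 + N^\eps r}{r^2-1} \;=\; \frac{r}{r^2-1}\bigl(N^\eps + \tfrac{1}{r}\bigr),
\]
which is bounded by $\frac{r}{r^2-1}\,N^\eps$ up to lower-order terms, yielding the claim for $N$ large enough.

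The main obstacle is conceptual rather than computational: it is the simultaneous choice of the tilt $t$ and the sections $S_k = \{i + \lfloor j/t\rfloor = k\}$ so that $h_c^t = h_v$ holds \emph{exactly}. This is what collapses the potential to a function of $(k,s)$ alone and makes the ratio $\phi_g/(\phi_f - \phi_g)$ uniform in $k$; without it the bound on $\E[X]$ would grow with $k$, destroying the subsequent $O(N^{1+2\eps}\log N)$ summation that drives the large-center upper bound on $\AT(B_{N,\eps},r)$.
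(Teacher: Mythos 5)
Your proposal is correct and follows essentially the same route as the paper: reuse Lemmas~\ref{l:qbyfprobs} and~\ref{l:xbyfprobs} for the tilted sections, observe that the tilt identity $h_c^t=h_v$ makes the extremal potentials in $S_k$ and $S_{k+1}$ differ by exactly a factor of $h_c$, and conclude $\E[X]\le h_c/(1-h_c)=\Theta(N^\eps)$. Your version is in fact slightly more explicit than the paper's (the reduction of $\phi$ to $h_v^k h_c^s$ and the exact evaluation $h_c/(1-h_c)=(1+N^\eps r)/(r^2-1)$ rather than an asymptotic approximation), but there is no substantive difference in the argument.
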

\begin{proof} Note that Lemma~\ref{l:qbyfprobs} and Lemma~\ref{l:xbyfprobs} are valid for tilted sections too. Let $(i,j)$ be some state in $S_k$ for which the value $f_{\max}$ is attained. Then $g_{\min}$ is attained at a state whose potential is equal to $h_c\cdot\phi(i,j)$. We continue as in the proof of Lemma~\ref{l:xconst} to get
$$\E[X]\le \frac{1-g_{\min}} {g_{\min}-f_{\max}} = \frac{ h_c\cdot \phi(i,j)-\phi(v,c) }{ \phi(i,j)-h_c\cdot \phi(i,j) } < \frac{h_c}{1-h_c}.
$$
Since for large $N$ we have  $h_c\approx 1-(1-1/r)/N^\eps$, the right-hand side can be approximated as
$$
\frac{h_c}{1-h_c}<\frac1{1-h_c}= \frac{N^\eps}{r-1/r}=\frac{r}{r^2-1}\cdot N^{\eps}.
$$
\end{proof}

It remains to bound the expected number $\E[\El_k]$ of the ``looping'' transitions of the form $S_k\to S_k$ before a transition of the form $S_k\to S_{k\pm1}$ occurs. This is done as before, observing that any two states that differ only in the number of mutants in the outside part of the graph always lie in different sections. Hence Lemma~\ref{l:loops} holds.

Finally, we prove the last inequality in Theorem~\ref{biptime}.
\begin{lemma}\label{largecenter} If $\eps\in(0,1/2)$ then $$\AT(B_{N,\eps},r) = O( N^{1+2\eps}\log N).$$
\end{lemma}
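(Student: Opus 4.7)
The plan is to adapt the proof of Lemma~\ref{smallcenter} to the large-center regime by replacing the vertical sections with the tilted sections just introduced, and by using the cross-section bound of Lemma~\ref{l:xbounded} in place of the constant bound that held for small centers. As in Theorem~\ref{thm:noaccelerators} and Lemma~\ref{smallcenter}, I first apply the restart trick to obtain $\AT(B_{N,\eps},r) = \fp(B_{N,\eps},r) \cdot T'(B_{N,\eps},r)$, where $T'$ is the fixation time of the restarted chain $M'$. Since $\fp(B_{N,\eps},r) \to 1 - 1/r^2 = \Theta(1)$ by Lemma~\ref{l:bipprob}, it suffices to show $T' = O(N^{1+2\eps}\log N)$.

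I then pass to the induced chain $\Em'$ whose vertices are the $s + 1 = v + \lfloor c/t \rfloor + 1 = \Theta(N)$ tilted sections $S_k$, and decompose the total number of transitions of $M'$ into (a) transitions between sections and (b) looping transitions within each section. By Lemma~\ref{l:xbounded}, the expected number of $S_{k+1} \to S_k$ transitions is $O(N^\eps)$ for every $k$, so each section is visited $O(N^\eps)$ times in expectation. By Lemma~\ref{l:loops} -- which stays valid for tilted sections because every $i$-change moves the chain to an adjacent tilted section -- the expected number of loops per visit to $S_k$ is at most $\frac{rN(N+c)}{c\,\min\{i,N-i\}} - 1$, evaluated at the state actually visited within $S_k$.

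For the \emph{bulk} sections $k \in [2c/t,\,N-2c/t]$ (recall $c/t = \Theta(N^{1-2\eps})$), every state $(i,j) \in S_k$ satisfies $i \in [k-c/t,\,k]$, hence $\min\{i,N-i\} \ge \tfrac{1}{2}\min\{k,N-k\}$ and the loop bound becomes $O\!\left(N^{1+\eps}/\min\{k,N-k\}\right)$. Summing the bulk contributions gives
\begin{equation*}
\sum_{k=2c/t}^{N-2c/t} O(N^\eps) \cdot O\!\left(\frac{N^{1+\eps}}{\min\{k,N-k\}}\right) \;=\; O(N^{1+2\eps}\log N),
\end{equation*}
which matches the claimed upper bound. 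For the $O(c/t) = O(N^{1-2\eps})$ boundary sections on each end one needs a finer state-wise accounting: the exact $i$-change probability from $(i,j)$ is $\bigl[rj(N-i)+i(c-j)\bigr]/(FN)$, which beats the uniform lower bound of Lemma~\ref{l:loops} whenever $j$ or $c-j$ is comparable to $c$. Once the boundary contribution is also shown to be $O(N^{1+2\eps}\log N)$, multiplying by $\fp = \Theta(1)$ yields $\AT(B_{N,\eps},r) = O(N^{1+2\eps}\log N)$.

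The main obstacle is the boundary analysis. For small $\eps$ the number of boundary sections $\Theta(N^{1-2\eps})$ is itself large, and a naive application of the worst-case loop bound $O(N^{1+\eps})$ per boundary section would overshoot the target by a polynomial factor. The resolution has to exploit that the worst-case loop bound is only attained on a thin $\Theta(N^\eps)$-wide strip of states (roughly $i$ small and $j$ small, or symmetrically $N-i$ small and $c-j$ small) within each boundary section, while for the remaining states the exact $i$-change probability is much larger, and to combine these per-state bounds carefully into the aggregate sum.
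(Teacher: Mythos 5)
Your decomposition---restart trick, tilted sections, Lemma~\ref{l:xbounded} for the $O(N^{\eps})$ cross-section transitions, Lemma~\ref{l:loops} for the loops---is exactly the route the paper takes, and your treatment of the bulk sections is correct. But the boundary analysis, which you yourself flag as ``the main obstacle,'' is never carried out: you describe what the resolution ``has to exploit'' (that the worst-case states with $\min\{i,N-i\}$ small inside a boundary section are rarely visited) without giving any quantitative bound on the number of visits to those states or the time spent there. This is a genuine gap, not a routine verification. Lemma~\ref{l:xbounded} controls crossings of tilted sections, i.e.\ of potential level sets, and says nothing about how the $O(N^{\eps})$ visits to a boundary section $S_k$ distribute over states $(i,j)\in S_k$ with very different values of $\min\{i,N-i\}$; the obvious fallback of counting intervals at a fixed outside-count $i$ fails because for $\eps<1/2$ the set of states with a given $i$ is not a potential level set (indeed $h_c^c\to 0$, which is precisely why the tilt was introduced). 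As you note, a worst-case accounting over the $\Theta(c/t)=\Theta(N^{1-2\eps})$ sections with $k\le \lfloor c/t\rfloor$ (and symmetrically near $k=N$) gives $O(N^{1-2\eps})\cdot O(N^{\eps})\cdot O(N^{1+\eps})=O(N^{2})$, which overshoots the target for every $\eps\in(0,1/2)$.

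For what it is worth, the paper's own proof does not visibly close this gap either: it copies the sum $\sum_{i=1}^{N-1}(2f(r)+1)(1+\E[\El_i])$ verbatim from the small-center case, indexing by the outside-count $i$ even though the $O(N^{\eps})$ visit bound of Lemma~\ref{l:xbounded} applies to the tilted sections indexed by $k=i+\lfloor j/t\rfloor$; the two accountings coincide only when every state of $S_k$ has $\min\{i,N-i\}=\Theta(\min\{k,N-k\})$, which fails exactly on the boundary sections you isolate. So you have correctly located the weak point of the argument, but a complete proof needs an actual estimate there---for instance, a bound showing that the expected number of steps spent at states with few outside mutants but many center mutants (small $i$, $j\gtrsim i\cdot t$) is negligible, say because from such states the process gains an outside mutant much faster than it gains a center mutant, so they are reached with rapidly decaying probability. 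Neither your proposal nor the paper supplies such an estimate, so as written the proposal does not establish the lemma.
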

\begin{proof}
As before, let $f(r)=\frac{r}{r^2-1}\cdot N^{\eps}$. By linearity of expectation, the total expected number of transitions is
\begin{align*}
\AT(B_{N,\eps},r) &\le  \fp(B_{N,\eps},r)\cdot\left( f(r)\cdot r(N+c)+ \left(\sum_{i=1}^{N-1} (2f(r)+1)(1+\E[\El_i])\right) +(f(r)+1)\cdot r(N+c)  \right)\\
  &=  \fp(B_{N,\eps},r)\cdot (2f(r)+1)\cdot r(N+c)\cdot \left(1+\frac Nc \sum_{i=1}^{N-1}\frac1{\min\{i,N-i\}}\right) =
  O\left(N^{\eps}\cdot N\cdot \frac Nc\cdot \log N\right)\\
  &=O(N^{1+2\eps}\log N)
 \end{align*}
and the result follows.
\end{proof}

Finally, we observe that an upper bound on $\AT(B_{N,\eps},r)$ immediately implies an asymptotically matching upper bound on $T(B_{N,\eps},r)$.
\begin{lemma}\label{abs2fix} Fix $r>1$ and $\eps>0$. If $\AT(B_{N,\eps},r) = O( N^\alpha \log N)$ then $T(B_{N,\eps},r) = O( N^\alpha \log N)$
\end{lemma}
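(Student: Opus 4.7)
The plan is to exploit the standard decomposition of the unconditional absorption time into its two conditional components. Specifically, by the law of total expectation,
\[
\AT(B_{N,\eps},r) = \fp(B_{N,\eps},r)\cdot T(B_{N,\eps},r) + \bigl(1-\fp(B_{N,\eps},r)\bigr)\cdot \ET(B_{N,\eps},r).
\]
Since extinction time is nonnegative, the second term is nonnegative, so $\fp(B_{N,\eps},r)\cdot T(B_{N,\eps},r) \le \AT(B_{N,\eps},r)$, which rearranges to
\[
T(B_{N,\eps},r) \le \frac{\AT(B_{N,\eps},r)}{\fp(B_{N,\eps},r)}.
\]

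The key observation is that $\fp(B_{N,\eps},r)$ is bounded away from zero. By Lemma~\ref{l:bipprob}, the fixation probability satisfies $\fp(B_{N,\eps},r)\to 1-1/r^2 > 0$ as $N\to\infty$, so in particular $\fp(B_{N,\eps},r) \ge c$ for some constant $c>0$ and all sufficiently large $N$. Combined with the hypothesis $\AT(B_{N,\eps},r) = O(N^\alpha \log N)$, this yields
\[
T(B_{N,\eps},r) \le \frac{1}{c}\cdot \AT(B_{N,\eps},r) = O(N^\alpha \log N),
\]
which is the claimed bound.

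There is no real obstacle here; the argument is one line once the decomposition is written down. The only substantive ingredient is that the fixation probability stays bounded below, which has already been established. I would only need to take minor care about finitely many small $N$ (handled by absorbing the finite constants into the $O(\cdot)$), and to note that the same argument works verbatim to deduce a matching upper bound on $T$ from any upper bound on $\AT$, which is how this lemma is invoked to complete the proof of Theorem~\ref{biptime} in both the small- and large-center cases.
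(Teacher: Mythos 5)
Your proposal is correct and follows essentially the same argument as the paper: decompose $\AT$ as the $\fp$-weighted average of $T$ and $\ET$, drop the nonnegative extinction term, and divide by the fixation probability, which is bounded away from zero by Lemma~\ref{l:bipprob}. Nothing further is needed.
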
 
\begin{proof} Since the absorption time is a weighted average of the fixation time and the extinction time, we have
$$
\AT(B_{N,\eps},r) =\fp(B_{N,\eps},r) \cdot T(B_{N,\eps},r)  + (1-\fp(B_{N,\eps},r))\cdot \ET(B_{N,\eps},r) )\ge \fp(B_{N,\eps},r) \cdot T(B_{N,\eps},r)
$$
Since $\fp(B_{N,\eps},r)\to 1-1/r^2$ as $N\to \infty$ and $\AT(B_{N,\eps},r) = O( N^\alpha \log N)$, the result follows.
\end{proof}
Altogether, Lemmas~\ref{l:bipprob},~\ref{timelower},~\ref{smallcenter},~\ref{largecenter} and~\ref{abs2fix} prove all the statements of Theorem~\ref{biptime}.

%
%
\pagebreak
\section*{Appendix 3: Temperature initialization, $\eps$-Weighted bipartite graphs}\label{app:3}
In this section we analyze the Weighted bipartite graphs $W_{N,\eps}$ under temperature initialization.
Recall that $W_{N,\eps}$ is a complete bipartite graph with one (smaller) part of size $c=N^{1-\eps}$, one (larger) part of size $N$, and every vertex of the larger part having a self-loop of such a weight $w$ that $N^{-\eps/2}=\frac{N^{1-\eps}}{w+N^{1-\eps}}$.

We prove the following theorem.
\begin{theorem} Fix $\eps\in(0,1)$ and $r>1$. Let $W_{N,\eps}$ be the Weighted bipartite graph. Then
\begin{itemize}
\item $\fp(W_{N,\eps},r)\to 1-1/r^2$.
\item There exist constants $c_1$, $c_2$ such that 
$$c_1\cdot N^{1+\eps}\log N \le \AT(B_{N,\eps},r) \le c_2\cdot N^{1+\frac32\eps}\log N.$$
\end{itemize}
Moreover, the fixation time $ T(B_{N,\eps},r)$ satisfies the same inequalities.
\end{theorem}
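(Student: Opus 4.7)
The plan is to follow the same template as Theorem~\ref{biptime}, adapting each ingredient for the self-loops and for temperature initialization. First I would identify a martingale potential $\phi(i,j) = h_v^i h_c^j$ by writing down the four transition rates from state $(i,j)$: crucially, an outside vertex selected for reproduction self-loops with probability $w/(w+c) = 1 - N^{-\eps/2}$ and reaches a specific center vertex with probability $1/(w+c)$, so the rates of center-directed migrations are scaled by $N^{-\eps/2}$ compared to $B_{N,\eps}$. Imposing $\E[\Delta\phi \mid (i,j)] = 0$ decouples into two linear equations that force $r^2 h_v h_c = 1$ and $h_v = (N + r(w+c))/(r^2 N + r(w+c))$; with $w + c = N^{1-\eps/2}$ this gives $h_v \to 1/r^2$ and $h_c = 1 - (r - 1/r) N^{-\eps/2} + o(N^{-\eps/2})$. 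The martingale thus coincides with that of $B_{N, N^{1-\eps/2}}$, which will guide the tilted-section analysis.

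For the fixation probability, I would compute the temperatures: $t(v) = w/(w+c) + c/N \to 1$ on the outside and $t(u) = N/(w+c) = N^{\eps/2}$ on the center. Total temperature mass is $\Theta(N)$ versus $\Theta(N^{1-\eps/2})$, so a temperature-initialized mutant lands in the outside with probability $1 - O(N^{-\eps/2})$. Since $\phi(N, c) = h_v^N h_c^c \to 0$, we get $\fp(1, 0) = (1 - h_v)/(1 - \phi(N, c)) \to 1 - 1/r^2$, and averaging against the temperature distribution preserves the limit. The lower bound on $\AT$ transfers almost verbatim from Lemma~\ref{timelower}: gaining an outside mutant requires picking one of at most $c$ center mutants (center vertices have no self-loops, so the migration step is unchanged), giving per-step probability $\le rc(N-k)/N^2$ and hence $\Omega(N^{1+\eps} \log N)$.

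The main work is the upper bound. I would decompose the state space into tilted sections $S_k = \{(i,j) : i + \lfloor j/t \rfloor = k\}$ with tilt $t = \lceil \log(1/h_v) / \log(1/h_c) \rceil = \Theta(N^{\eps/2})$, chosen so that $h_c^t \approx h_v$. The argument of Lemma~\ref{l:xbounded} then bounds the expected cross-section traffic by $\E[X] = O(h_c/(1-h_c)) = O(N^{\eps/2})$, while Lemma~\ref{l:loops} carries over to $\E[\El_k] = O(N^{1+\eps}/\min(i, N-i))$ unchanged, because it relies solely on a center vertex being picked with probability $\ge c/(r(N+c))$ and migrating uniformly to one of $N$ outside vertices; the self-loops sit on the outside and do not enter this bound. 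Multiplying and summing over sections yields $\AT \le O(N^{\eps/2}) \cdot O(N^{1+\eps} \log N) = O(N^{1 + \frac{3}{2}\eps} \log N)$, and the corresponding bound on $T$ follows from Lemma~\ref{abs2fix}.

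The main obstacle is locating exactly where the extra $N^{\eps/2}$ factor enters relative to the $N^{1+\eps}$ bound attainable by $B_{N,\eps}$ in the small-center regime. It comes from the outside self-loops pushing $h_c$ much closer to $1$ than in $B_{N,\eps}$: the decay rate of $1 - h_c$ is $N^{-\eps/2}$ instead of $N^{-\eps}$, forcing the optimal tilt $t$ to shrink from $\Theta(N^\eps)$ to $\Theta(N^{\eps/2})$ and inflating $\E[X]$ accordingly. The delicate point is to check that the within-section bound $\E[\El_k]$ is not simultaneously inflated by the self-loops; this relies on the asymmetry of the construction (self-loops only on the larger part), which leaves the rate of center-initiated section escape unchanged.
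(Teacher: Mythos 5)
Your proposal is correct and follows essentially the same route as the paper: the same martingale potential with $v$ effectively replaced by $qv=N^{1-\eps/2}$ (giving $h_v\to 1/r^2$ and $1-h_c=\Theta(N^{-\eps/2})$), the fixation probability via concentration of temperature on the outside part, the lower bound transferred from the unweighted case, and the upper bound via tilted sections with $\E[X]\le h_c/(1-h_c)=O(N^{\eps/2})$ combined with the unchanged looping-transition bound. You in fact spell out several details the paper leaves implicit (the explicit temperatures, the size of the tilt, and the precise source of the extra $N^{\eps/2}$ factor), so no gaps to report.
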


\subsubsection*{Martingales for Weighted bipartite graphs}
First, we recall more martingales background.

Fix $r>1$.
Given integers $v$, $c$, and a real number $q\in(0,1)$, let $W(c,v,q)$ be a Weighted complete bipartite graph with $c$ vertices in the smaller part (center) and $v$ vertices at the larger (outside) part, each of them with an extra self-loop of such weight $w$ that $q=\frac{c}{w+c}$ is the probability that when a vertex in the larger part is selected for reproduction, its offspring replaces one of the vertices in the smaller part (as opposed to replacing its parent via the self-loop).
Then, as with the unweighted complete bipartite graphs, the state space can be parametrized by the number $0\le i\le v$ of mutants in the outside part and the number $0\le j\le c$ of mutants in the center and the fixation probabilities from all the states can be computed similarly to above, with $v$ replaced by $v\cdot q$.

Namely, let
$$h_v=\frac{qv+cr}{qvr^2+cr}, \qquad h_c=\frac{c+qvr}{cr^2+qvr}$$ 
and for every state $(i,j)$ of $i$ mutants in the outside part and $j$ mutants in the center, define a potential function $\phi(i,j)=h_v^i\cdot h_c^j$.
Then we easily check the fixation probability from a state $(i,j)$ is given by
$$\fp(i,j)=\frac{\phi(0,0)-\phi(i,j)}{\phi(0,0)-\phi(v,c)}=\frac{1-\phi(i,j)}{1-\phi(v,c)}.$$

\subsubsection*{Fixation probability}
With the extra martingales background, the analysis of the fixation probability is again relatively straightforward.
\begin{lemma}\label{weightedprob} $\fp_\temp (W_{N,\eps},r)\to 1-1/r^2$.
\end{lemma}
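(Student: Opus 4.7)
The strategy is parallel to Lemma~\ref{l:bipprob}, but now we must account for the non-uniform distribution induced by temperature initialization on a weighted graph. Two calculations are needed: (i)~show that temperature initialization puts the mutant into the larger part with probability tending to $1$, and (ii)~apply the martingale/potential formula for $W(c,N,q)$ to show that the fixation probability from the state $(1,0)$ tends to $1-1/r^2$.

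First I would compute the temperature of every vertex directly from the definition
$t(v)=\sum_{u\in N(v)} w(u,v)/\sum_{v'\in N(u)} w(u,v')$. For a vertex $v$ in the larger part, the total outgoing weight at $v$ is $w+c$, and $v$ receives a self-loop contribution of $w/(w+c) = 1-q$ plus a contribution of $1/N$ from each of the $c$ smaller-part neighbors, giving $t(v_{\mathrm{large}}) = (1-q) + c/N = 1-N^{-\eps/2}+N^{-\eps}$. For a vertex $u$ in the smaller part, the contribution from each of the $N$ larger-part neighbors is $1/(w+c) = q/c$, giving $t(u_{\mathrm{small}}) = Nq/c = N^{\eps/2}$. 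Summing, the total temperature is
\[
T_{\mathrm{total}} = N\bigl(1-N^{-\eps/2}+N^{-\eps}\bigr) + N^{1-\eps}\cdot N^{\eps/2} = N + N^{1-\eps}.
\]
Hence under temperature initialization the mutant is placed in the larger part with probability $P_L = N\,t(v_{\mathrm{large}})/T_{\mathrm{total}} \to 1$ and in the smaller part with probability $P_S = N^{1-\eps/2}/(N+N^{1-\eps}) = \Theta(N^{-\eps/2}) \to 0$.

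Second I would evaluate the potential function. With $v=N$, $c=N^{1-\eps}$, and $qv = N^{1-\eps/2}$, note that $qv$ dominates $c$ by a factor $N^{\eps/2}\to\infty$. Dividing numerator and denominator of
\[
h_v = \frac{qv + cr}{qv\,r^2 + cr}
\]
by $qv$ gives $h_v = (1+N^{-\eps/2}r)/(r^2+N^{-\eps/2}r) \to 1/r^2$. Similarly $h_c = (c+qvr)/(cr^2+qvr) \to 1$ from below. The key quantity $\phi(v,c) = h_v^{\,v}\,h_c^{\,c}$ is controlled by $h_v^{\,v}$: since $h_v < 1/r^2 + \delta < 1$ for any small $\delta>0$ and all sufficiently large $N$, we have $h_v^{\,N}\to 0$, hence $\phi(v,c)\to 0$. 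Plugging into the formula for fixation probability gives $\fp(1,0) = (1-h_v)/(1-\phi(v,c)) \to 1-1/r^2$.

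Finally I would combine the two ingredients. By symmetry within each part, $\fp_\temp(W_{N,\eps},r) = P_L\cdot \fp(1,0) + P_S\cdot \fp(0,1)$, where $\fp(0,1)\le 1$ is bounded. Taking the limit,
\[
\fp_\temp(W_{N,\eps},r) \to 1\cdot (1-1/r^2) + 0\cdot \fp(0,1) = 1-1/r^2,
\]
as desired. The only step that requires care is the asymptotic analysis of the potential, specifically verifying $\phi(v,c)\to 0$ uniformly in $\eps\in(0,1)$; but this is immediate from the crude bound $h_v\le 1/r^2+o(1)$ together with $v=N\to\infty$, so no serious obstacle arises.
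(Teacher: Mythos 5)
Your proposal is correct and follows essentially the same route as the paper: show the temperature initialization places the mutant in the larger part with probability tending to $1$, then apply the potential formula with $v$ replaced by $qv$ to get $\fp(1,0)=(1-h_v)/(1-h_v^v h_c^c)\to 1-1/r^2$. The only difference is that you derive the initialization probabilities explicitly from the temperature formula, whereas the paper simply states them as proportional to $qv$ and $c+(1-q)v$; your computation agrees with that.
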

\begin{proof} The first mutant is introduced in the center with probability proportional to $q\cdot v$ and to the outside part with probability proportional to $c+(1-q)v$. Since 
 $q=o(1)$, it is introduced to the outside part with high probability.
The fixation probability $\fp(1,0)$ starting from a state with a single mutant in the outside part satisfies
$$ \fp(1,0)=\frac{1-\phi(1,0)}{1-\phi(v,c)}=\frac{1-h_v}{1-h_v^vh_c^c}.
$$
Since $h_v\approx \frac1{r^2}$ and $h_v^v\to 0$ as $N\to\infty$, we have $\fp(1,0)\to 1-1/r^2$.
\end{proof}

\subsubsection*{Fixation time}
The arguments for fixation time are direct translations of arguments for (unweighted) $\eps$-Balanced bipartite graphs (see Section~\ref{balanced}). For the lower bound, Lemma~\ref{timelower} still applies. For the upper bound, we proceed analogously.

\begin{lemma}\label{weightedtime} If $\eps\in(0,1)$ and $r>1$ then $T_\temp(W_{N,\eps},r)=O( N^{1+\frac32\eps}\log N)$.
\end{lemma}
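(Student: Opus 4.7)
The plan is to mirror the large-center argument of Appendix~2 (the proof of Lemma~\ref{largecenter}), with every parameter rescaled to reflect the weighted self-loops. The martingale potentials for $W_{N,\eps}$ are those of an unweighted bipartite graph whose outer part has effective size $vq=N^{1-\eps/2}$, so $h_v\to 1/r^2$ and a direct computation yields $1-h_c\approx(r-1/r)\,N^{-\eps/2}$. Choose the tilt $t=\Theta(N^{\eps/2})$ so that $h_c^t\approx h_v$, and partition the state space into the $\Theta(N)$ tilted sections $S_k=\{(i,j):i+\lfloor j/t\rfloor=k\}$, exactly as in that proof.

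The cross-section inequalities (Lemmas~\ref{l:qbyfprobs}, \ref{l:xbyfprobs}, \ref{l:xbounded}) transfer unchanged, but the scale of $1-h_c$ differs, so the bound becomes
\begin{equation*}
\E[X]\le \frac{h_c}{1-h_c}\approx \frac{r}{r^2-1}\cdot N^{\eps/2},
\end{equation*}
and each tilted section is entered $O(N^{\eps/2})$ times, rather than $O(N^\eps)$. The within-section loop bound, on the other hand, is \emph{unchanged}. Inside a single visit to a tilted section the outer count $i$ is constant, and an exit requires $i$ to change, which happens only when a center vertex is selected for reproduction and places its offspring on an outer vertex of the opposite type. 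Since the self-loops of $W_{N,\eps}$ live only on \emph{outer} vertices, the fitness-weighted reproduction rate of center vertices, as well as the probability that a center offspring lands on a differently-typed outer, are both identical to those in the unweighted $B_{N,\eps}$. Repeating the counting of Lemma~\ref{l:loops} therefore still gives $\E[\El_i]=O\!\left(r(N+c)N/(c\min(i,N-i))\right)=O(N^{1+\eps}/\min(i,N-i))$ Moran steps per visit.

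Combining the two contributions exactly as in the proof of Lemma~\ref{largecenter} then yields
\begin{equation*}
\AT_\temp(W_{N,\eps},r)\le O(1)\cdot N^{\eps/2}\cdot N^{1+\eps}\cdot\sum_{i=1}^{N-1}\frac{1}{\min(i,N-i)}=O\bigl(N^{1+\tfrac{3}{2}\eps}\log N\bigr).
\end{equation*}
The stated bound on the fixation time $T_\temp(W_{N,\eps},r)$ then follows by a verbatim repetition of Lemma~\ref{abs2fix}, whose argument only uses that $\fp_\temp(W_{N,\eps},r)\to 1-1/r^2$, which is established in Lemma~\ref{weightedprob}.

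The main subtlety will be explaining why the slowdown from self-loops is charged only once, through the cross-section bound, and does not also inflate the loop bound. The key asymmetry is that self-loops slow \emph{outer-to-center} reproduction by the factor $q=N^{-\eps/2}$, which is precisely what flattens $h_c$ and raises the per-section visit count from $O(1)$ to $O(N^{\eps/2})$; but the $i$-changes that force a section exit are driven by center vertices reproducing, and those rates are untouched. This asymmetry is what gives the final exponent $1+\tfrac{3}{2}\eps$ instead of the naive $1+2\eps$ one would obtain by applying Lemma~\ref{largecenter} to an unweighted bipartite graph with outer part of size $N^{1-\eps/2}$ and then multiplying by the worst-case self-loop overhead $1/q$.
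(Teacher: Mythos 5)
Your proposal is correct and follows essentially the same route as the paper: the paper's own (very terse) proof likewise reuses the tilted-section machinery of Appendix~2, notes that $\E[X]\le h_c/(1-h_c)=\Theta(N^{\eps/2})$ because the self-loops rescale the outer part's effective size to $vq=N^{1-\eps/2}$, and then applies the unchanged loop bound of Lemma~\ref{l:loops} to get $O(N^{\eps/2}\cdot N\cdot\frac Nc\cdot\log N)=O(N^{1+\frac32\eps}\log N)$. Your explicit justification that the within-section loop bound is unaffected (exits are driven by center vertices, which carry no self-loops) is a detail the paper leaves implicit, but it is the same argument.
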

\begin{proof}
Fixing $k$ and considering the section $S_k$, we denote by $X$ the expected number of transitions from any state in $S_{k+1}$ to any state in $S_k$. As in Lemmas~\ref{l:xconst} and \ref{l:xbounded} we get
$$\E[X]\le \frac{h_c}{1-h_c} =\Theta(N^{\eps/2}).$$
Lemma~\ref{l:loops} then yields
$$ T_\temp(W_{N,\eps})=O\left(N^{\eps/2}\cdot N\cdot \frac Nc\cdot \log N\right)=O(N^{1+\frac32\eps}\log N).
$$
\end{proof}
Altogether, Lemmas~\ref{timelower},~\ref{abs2fix},~\ref{weightedprob} and~\ref{weightedtime} prove all the statements of Theorem~\ref{weighted}

%
%
\pagebreak
\section*{Appendix 4: Time on Complete graph}\label{app:4}

For reference purposes we compute the absorption time $\AT(K_N,r)$, the fixation time $T(K_N,r)$, and the extinction time $\ET(K_N,r)$ of a single advantageous mutant ($r>1$) on a Complete graph $K_N$, using the standard difference method.
\begin{theorem} Fix $r>1$ and let $K_N$ be the Complete graph on $N$ vertices. Then
\begin{align*}
\AT(K_N,r)&=(N-1)H_{N-1}\cdot \frac{r+1}r + (N-1)\cdot\log(1-1/r) - \frac1{r(r-1)} +o(1),\\
T(K_N,r)&=(N-1)H_{N-1}\cdot \frac{r+1}{r-1} + (N-1)\cdot\frac{r+1}{r-1}\log(1-1/r)+o(N),\\
\ET(K_N,r)&=  (N-1)\cdot \log\left(\frac r{r-1}\right) +o(N).
\end{align*}
In particular for $r=1+s$, $s>0$ small, we have  $\AT(K_N,r)\approx 2\cdot N\log N$, $T(K_N,r)\approx\frac 2s\cdot N\log N $, and $\ET(K_N,r)\approx \frac1s\cdot N $.
\end{theorem}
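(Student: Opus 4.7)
The plan is to reduce the Moran process on $K_N$ to a birth-death chain on $\{0,1,\dots,N\}$ (with state equal to the number of mutants) and apply the standard difference method. On $K_N$ the transition probabilities are
\[
p_i = \frac{ri}{ri + (N-i)} \cdot \frac{N-i}{N-1}, \qquad q_i = \frac{N-i}{ri + (N-i)} \cdot \frac{i}{N-1},
\]
so the key identity $q_i/p_i = 1/r$ holds for all $1 \le i \le N-1$. By vertex-symmetry of $K_N$, uniform initialization corresponds to starting from state $i=1$.

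For the absorption time, let $a_i$ denote the expected absorption time from $i$ mutants. The first-step equation rearranges to $p_i(a_{i+1}-a_i) - q_i(a_i-a_{i-1}) = -1$. Setting $\delta_i := a_{i+1}-a_i$, this becomes the first-order linear recurrence $\delta_i = \frac{1}{r}\delta_{i-1} - \frac{1}{p_i}$, which together with $a_0 = a_N = 0$ (equivalently $\sum_{i=0}^{N-1}\delta_i = 0$) yields the closed form
\[
\AT(K_N,r) = a_1 = \frac{1}{1-r^{-N}}\sum_{j=1}^{N-1}\frac{1-r^{-(N-j)}}{p_j}.
\]
Partial fractions give the clean decomposition $1/p_j = \frac{N-1}{r}\cdot\frac{1}{j} + (N-1)\cdot\frac{1}{N-j}$, so the unweighted sum evaluates exactly to $(N-1)H_{N-1}\cdot\frac{r+1}{r}$, producing the leading term. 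Within the subtracted tail, the factor $r^{-(N-j)}$ concentrates the contribution near $j=N$: after the change of variables $k=N-j$, the identity $\sum_{k\ge 1} r^{-k}/k = -\log(1-1/r)$ accounts for the $(N-1)\log(1-1/r)$ correction, while the subleading piece $\frac{N-1}{rN}\sum_{k\ge 1} r^{-k} \to \frac{1}{r(r-1)}$ produces the additive constant $-\frac{1}{r(r-1)}$. The prefactor $(1-r^{-N})^{-1}$ contributes only exponentially small corrections absorbed into the $o(1)$ remainder.

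For the conditional fixation and extinction times, apply the Doob $h$-transform with $h_i := \rho_i = (1-r^{-i})/(1-r^{-N})$ to obtain the chain conditioned on fixation, whose transition probabilities are $\tilde p_i = p_i\,\rho_{i+1}/\rho_i$ and $\tilde q_i = q_i\,\rho_{i-1}/\rho_i$; run the same difference method in this transformed chain to obtain $T(K_N,r)$, then recover the extinction time from the identity $\AT = \rho T + (1-\rho)\ET$ (or symmetrically from the $h$-transform with respect to $1-\rho_i$). The partial-fraction expansion of $1/\tilde p_j$ replaces the factor $\frac{r+1}{r}$ in the leading term by $\frac{r+1}{r-1}$, matching the stated formula. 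The main obstacle is careful bookkeeping of asymptotics: one must distinguish $O(1)$ versus $o(1)$ remainders in the tail sums $\sum_j r^{-(N-j)}/p_j$ and their conditioned analogues so that the exact additive constants come out correctly for $\AT$ (error $o(1)$), while the looser $o(N)$ remainder for $T$ and $\ET$ only requires identifying the two leading orders.
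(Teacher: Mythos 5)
Your proposal is correct and follows essentially the same route as the paper: the same first-order recurrence $\delta_i=\tfrac1r\delta_{i-1}-1/p_i$ exploiting $q_i/p_i=1/r$, the same closed form and partial-fraction split $1/p_j=(N-1)\bigl(\tfrac{1}{rj}+\tfrac{1}{N-j}\bigr)$ with the identical tail analysis producing $(N-1)\log(1-1/r)$ and $-\tfrac{1}{r(r-1)}$, a conditioned chain for the fixation time, and the identity $\AT=\rho T+(1-\rho)\ET$ for the extinction time. The only cosmetic difference is that the paper conditions by working with the products $\rho_k T_k$, which keeps the ratio of successive differences equal to the constant $1/r$, whereas in the $h$-transformed chain the ratio $\tilde q_i/\tilde p_i=\tfrac1r\rho_{i-1}/\rho_{i+1}$ is no longer constant and the bookkeeping is slightly heavier; the two computations are equivalent.
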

\begin{proof}
First, we compute the absorption time, then the fixation time and finally the extinction time.

%
\sne{Absorption time} Fix $N$ and $r$ and for $k=0,\dots,N$ let $T_k$ be the expected absorption time from a state with $k$ mutants. Clearly $T_0=T_N=0$ and for $k=1,\dots,N-1$ we have
$$ T_k= 1+ p(k,k)T_k + p(k,k-1)T_{k-1} + p(k,k+1)T_{k+1},
$$
where $p(i,j)$ is the transition probability from a state with $i$ mutants to a state with $j$ mutants.
Specifically, we have
$$ p(k,k-1)=\frac{N-k}{N+(r-1)k}\cdot \frac{k}{N-1}.
\qquad \textrm{and}\qquad
p(k,k+1)=\frac{r\cdot k}{N+(r-1)k}\cdot \frac{N-k}{N-1}.
$$
Plugging in those values of $p(i,j)$, the above equation can be rewritten as 
$$ T_{k+1}-T_k= \frac1r\left(T_k-T_{k-1}\right) - \left(\frac{N-1}{rk} +\frac{N-1}{N-k} \right).
$$
Setting $\Delta_k\equiv T_k-T_{k-1}$ and $x_k=\frac{N-1}{rk} +\frac{N-1}{N-k}$ this further rewrites as
$$ \Delta_{k+1}= \frac1r\Delta_k - x_k.
$$
Specifically, we have $\Delta_1=T_1-T_0=T_1$ and $\Delta_1+\dots+\Delta_N=T_N-T_0=0$.
Let's write
\begin{align*}
\Delta_2 &= \frac1r \Delta_1 - x_1 ,\qquad\qquad(1)\\
\Delta_3 &= \frac1r \Delta_2 - x_2 ,\qquad\qquad(2)\\
\Delta_4 &= \frac1r \Delta_3 - x_3 ,\qquad\qquad(3)\\
&\dots \\
\Delta_N &= \frac1r \Delta_{N-1} - x_{N-1},\qquad(N-1)
\end{align*}

We aim to express each $\Delta_k$ in terms of $\Delta_1$ only.
Summing up $\frac1r(1)+(2)$ gives
$$\Delta_3=\frac1{r^2}\Delta_1 - \left(x_2+\frac1rx_1\right)
$$
Similarly, summing up $\frac1{r^2}(1)+\frac1{r}(2)+(3)$ gives
$$\Delta_4=\frac1{r^3}\Delta_1 - \left(x_3+\frac1rx_2+\frac1{r^2}x_1\right)
$$
and similarly all the way up to 
$$\Delta_N=\frac1{r^{N-1}}\Delta_1 - \left(x_{N-1}+\frac1rx_{N-2}+\dots+\frac1{r^{N-2}}x_1\right).
$$

Summing up all of them, together with an extra equation $\Delta_1=\Delta_1$, we get
$$0=\Delta_1+\dots+\Delta_N = \Delta_1\left(1+\dots+\frac1{r^{N-1}}\right)
  - \big( x_1(1+\dots+1/r^{N-2}) +x_2(1+\dots+1/r^{N-3}) +\dots + x_{N-1}\cdot 1  \big)
$$
and in turn
\begin{align*}
\Delta_1 &=\frac{1-1/r}{1-1/r^N}\cdot \frac{ x_1(1-1/r^{N-1}) + x_2(1-1/r^{N-2})+\dots+x_{N-1}(1-1/r) } {1-1/r} \\
 &=\frac1{1-1/r^N} \Big( \underbrace{ \sum_{k=1}^{N-1} x_k }_{A} - \underbrace{ \sum_{k=1}^{N-1}\frac{x_k}{r^{N-k}} }_{ B} \Big).
\end{align*}
For $A$ we easily get $A= \frac{r+1}r (N-1)H_{N-1}$. For $B$, we plug in $x_k=\frac{N-1}{rk} +\frac{N-1}{N-k}$, split $B=B_1+B_2$ and separately compute the sums using a well-known limit
$$ B_2=(N-1)\sum_{i=1}^{N-1}\frac1{N-k}\cdot \frac1{r^{N-k}}=(N-1)\left(\log(1-1/r)+O(1/r^N)\right)\to (N-1)\log(1-1/r) +o(1)
$$
and an approximation
$$B_1=\frac1r \sum_{k=1}^{N-1}\frac{N-1}{N-k} \frac1{r^{k}}  = \frac1r\sum_{k=1}^{N-1}\frac1{r^k}  +  E(N) = \frac1{r(r-1)} +o(1)
$$
whose error term
$$E(N) = \sum_{k=1}^{N-1} \frac{k-1}{N-k}\cdot \frac 1{r^k}
$$
tends to 0, because the sum $S_1$ over the first $\sqrt[3] N$ terms satisfies
$$S_1\le \sqrt[3] N\cdot \frac{\sqrt[3] N}{N-\sqrt[3] N}\cdot \frac1{r} < \frac1{\sqrt[3] N}\cdot \frac 1r \to 0
$$
and the sum $S_2$ of the remaining terms satisfies
$$ S_2\le (N-\sqrt[3] N)\cdot \frac N1 \cdot \frac 1{r^{\sqrt[3] N}} \to 0.
$$
This concludes the proof of the absorption time.

%
\sne{Fixation time} We proceed similarly. 

As before, we fix $N$ and $r$ and for $k=1,\dots,N$ we let $\CT_k$ be the expected (conditional) fixation time from a state with $k$ mutants (for $k=0$ we define $\CT_0=0$). Then $\CT_N=0$ and for $k=1,\dots,N-1$ we have
$$ \fp_k\CT_k= \fp_k+ p(k,k)\cdot \fp_k\CT_k + p(k,k-1)\cdot \fp_{k-1}\CT_{k-1} + p(k,k+1)\cdot \fp_{k+1}\CT_{k+1},
$$
where $\fp_i=\frac{1-1/r^i}{1-1/r^N}$ are the fixation probabilities and $p(i,j)$ are the transition probabilities.
Setting $\Delta_k\equiv\fp_k\CT_k - \fp_{k-1}\CT_{k-1}$ and $x_k\equiv\frac{\fp_k}{p(k,k+1)}$ this can be rewritten as
$$ \Delta_{k+1}= \frac1r\Delta_k - x_k.
$$
Specifically, we have $\Delta_1=\fp_1\cdot\CT_1-\fp_0\cdot\CT_0=\fp_1\cdot\CT_1$ and $\Delta_1+\dots+\Delta_N=\fp_N\cdot\CT_N-\fp_0\cdot\CT_0=0-0=0$. As before, we obtain
$$(1-1/r^N)\cdot\Delta_1 =\underbrace{ \sum_{k=1}^{N-1} x_k }_{A} - \underbrace{ \sum_{k=1}^{N-1}\frac{x_k}{r^{N-k}} }_{ B}.
$$
This time, $p(k,k+1)=\frac{rk}{N+(r-1)k}\cdot\frac{N-k}{N-1}$ and thus
$$x_k=\frac{\fp_k}{p(k,k+1)}=\frac{1-1/r^k}{1-1/r^N}\cdot \left( \frac{r-1}r + \frac{N}{rk} \right) \cdot\frac{N-1}{N-k}
$$
and
$$A=\sum_{k=1}^{N-1} x_k=\frac{N-1}{1-1/r^N}\cdot \sum_{k=1}^{N-1} (1-1/r^k)\cdot \left(\frac{1}{N-k} + \frac{1}{rk}\right).
$$
Multiplying out the two parentheses we get
$$X\equiv\sum_{k=1}^{N-1} \frac{1}{N-k} + \frac{1}{rk} =(1+1/r)\cdot H_{N-1}
$$
and
$$Y\equiv\sum_{k=1}^{N-1} \frac1{r^k(N-k)} \to 0, \quad Z\equiv \sum_{k=1}^{N-1} \frac1{r^k\cdot rk} \to \frac1r\log(1-1/r).
$$
Hence
$$A=\frac{N-1}{1-1/r^N}\cdot(X+Y+Z) = (1+1/r)\cdot (N-1)H_{N-1} +  \frac1r\log(1-1/r)\cdot (N-1) + o(N).
$$
We proceed with $B$ analogously. This time, the only combination that survives is
$$\sum_{k=1}^{N-1} \frac1{r^{N-k}}\cdot \frac1{N-k} \to \log(1-1/r),
$$
hence $B = \log(1-1/r)\cdot (N-1) +o(N)$.

In total, we get
$$(1-1/r^N)\cdot \frac{1-1/r}{1-1/r^N}\cdot \CT_1 =(1-1/r^N)\cdot \Delta_1 = A+B = \frac{r+1}r\cdot (N-1)H_{N-1} + \frac{r+1}r\log(1-1/r)\cdot (N-1) + o(N)
$$
and finally the desired
$$\CT(N,r)=\CT_1 = \frac{r+1}{r-1}\cdot (N-1)H_{N-1} + \frac{r+1}{r-1}\log(1-1/r)\cdot (N-1) +o(N).
$$

%
\sne{Extinction time} A formula for the extinction time follows easily from the absorption time and the fixation time.

It suffices to note that $\AT_1=\fp_1\cdot T_1+(1-\fp_1)\cdot \ET_1$ and plug in the expressions for $\AT_1$ and $T_1$. The $N\log N$ term cancels out and we are left with
$$\ET_1= -\log(1-1/r)\cdot (N-1) +o(N).
$$
\end{proof}

\end{document}